\documentclass[12pt]{amsart}

\usepackage{cite}
\usepackage{xypic}
\usepackage{graphicx}
\usepackage{amsfonts}
\usepackage{mathrsfs}
\usepackage{amssymb}
\usepackage{amsxtra}
\usepackage{lscape}
\usepackage{pdflscape}
\usepackage{rotating}
\def\G{\Gamma}
\def\del{\partial}
\def\SS{\mathbb S}
\def\Z{\mathbb Z}
\def\T{\mathbb T}
\def\R{\mathbb R}
\def\C{\mathbb C}
\def\TTheta{{\mathbb T}_{\Theta}}
\def\TTn{\TTheta^n}
\def\Tn{\T^n_0}
\def\B{\mathscr B}
\def\BG{\bar \Gamma}
\def\Gsimp{\bar\Gamma_{simp}}
\def\Sk{\mathbb{S}_k}

\def\nn{\nonumber}

\def\A{\mathscr{A}}
\def\H{\mathscr{H}}

\def\chr{\Xi}

\newtheorem{thm}{Theorem}[section]

\newtheorem{cor}[thm]{Corollary}

\theoremstyle{definition}

\newtheorem{rmk}[thm]{Remark}

\newtheorem{ex}[thm]{Example}

\newcommand{\cev}[1]{\reflectbox{\ensuremath{\vec{\reflectbox{\ensuremath{#1}}}}}}

\accentedsymbol{\dbarG}{\Bar{\Bar{\Gamma}}}

\begin{document}


\title[Singularities, swallowtails and Dirac points]
{Singularities, swallowtails and Dirac points.
An analysis for families
of Hamiltonians and applications
to wire networks, especially the Gyroid
 }

\author
[Ralph M.\ Kaufmann]{Ralph M.\ Kaufmann}
\email{rkaufman@math.purdue.edu}

\address{Department of Mathematics, Purdue University,
 West Lafayette, IN 47907
 }

\author
[Sergei Khlebnikov]{Sergei Khlebnikov}
\email{skhleb@physics.purdue.edu}

\address{Department of  Physics, Purdue University,
 West Lafayette, IN 47907}

\author
[Birgit Kaufmann]{Birgit Wehefritz--Kaufmann}
\email{ebkaufma@math.purdue.edu}

\address{Department of Mathematics and Department of Physics, Purdue University,
 West Lafayette, IN 47907
 }

\begin{abstract}
Motivated by the Double Gyroid nanowire network we develop methods to detect Dirac points
and classify level crossings, aka.\ singularities in the spectrum
of a family of Hamiltonians. 

The approach we use is singularity theory. 
Using this language, we obtain a characterization
of Dirac points and also show that the branching behavior
 of the level crossings is given by an unfolding of $A_n$ type singularities.
 Which type of singularity occurs can be read off
  a characteristic region inside the miniversal unfolding of an $A_k$ singularity.

We then apply these methods in the setting of families of graph Hamiltonians, 
such as those for wire networks.
In the particular case of the Double Gyroid we analytically classify its singularities
and show that it
has Dirac points. This indicates that nanowire systems 
of this type should have very special physical properties.
\end{abstract}

\maketitle

{\bf Keywords}: Gyroid, Double Gyroid, Dirac points, Swallowtail, singularities, dispersion relation,
families of Hamiltonians, torus cover.

\section*{Introduction}

In many physical situations one is led to a family of finite dimensional 
Hamiltonians defined over some parameter space (base) $B$ and would like to consider 
and classify the level crossings that appear when one changes the parameters. 
We came upon this general context when studying the $C^*$--geometry
of wire networks in general and the Double Gyroid network in particular \cite{kkwk,kkwk2}.
These networks are spatially periodic, and the base $B$ is a torus spanned 
by the values of the quasimomenta. The dependence of energy eigenvalues on quasimomenta 
determines the band structure; the simplest type of band crossing, 
a conical intersection,
is often referred to as a Dirac point. Band crossings are interesting because they may
be responsible 
for new physical phenomena (as well known, for instance, in the case of Dirac points in 
graphene). Of particular interest are Dirac points in triply periodic materials, such as
the Gyroid network: they can be viewed as magnetic monopoles in the 3-dimensional parameter 
space  \cite{Berry}  and as such are expected to be
stable under small deformations of the Hamiltonian.
In order to study various types of level crossings, we first widen the context to that 
of a family of Hamiltonians over an arbitrary base, 
and then apply the general results to our initial problem, in which the base is actually
the compact $n$--torus.

To be more specific, we 
consider a differentiable map from an $n$--dimensional manifold
$B$ to the set of $k\times k$ Hermitian
matrices for a fixed $k$. The case of 
interest will be $B=T^n$, the $n$-dimensional torus, and it can be thought of as the space of momenta.
Our results about the multiplicities in the spectrum are obtained using singularity 
theory \cite{arnoldbook}.
They are twofold. First we give an analytic way of finding all Dirac points,
by considering the energy levels as the zero set of a smooth function $P$ on $B\times \R$.
 Using
the Morse Lemma (Theorem \ref{morselemma}) we explain
that a Dirac point in this context and language is an isolated $A_1$ singularity with the 
signature $(-\dots-+)$ (or $(+\dots+-)$ depending on the sign of $P$) 
for the function $P$. 
This effectively uses an ambient space to embed
the conical singularity.

Then by considering the energy levels as a singular fibration over the base space of
momenta, we classify the possible singularities in the fibers. The fibration in question is 
the first projection $B\times \R\to \R$.
For this we again use singularity theory, more precisely that of 
the singularity $A_{k-1}$ and its miniversal unfolding. In particular,
we define a characteristic map of the base $B$ of the family to the base $\Lambda=\C^{k-1}$
of the miniversal unfolding.
From its image, which we call {\em the characteristic region}, one can read off many details, 
such as at what points degeneracies occur, and what their nature is.
Degeneracies occur precisely over the points of intersection of the characteristic region
with the discriminant locus.

This point of view allows us to classify the possible degeneracies as those appearing in the 
discriminant locus or swallowtail of the $A_{k-1}$ singularity.
These are known by a theorem of Grothendieck \cite{grothendieck} on 
the singularities appearing in the fibers of the miniversal unfolding
 associated to the singularities corresponding to  Dynkin diagrams. Namely,  these are precisely
those  obtained by deleting vertices (and all incident edges) of that 
of $A_{k-1}$ and hence are
of the form $(A_{n_1},\dots, A_{n_r})$ for suitable $n_i$. This corresponds to simultaneous
crossing of $n_1+1,\dots, n_r+1$ levels.
How these levels cross or equivalently how the singularity unfolds is encoded
in the characteristic map and can qualitatively be read off from the characteristic region.

One way to view this result is as a more precise and more general
version of the  von Neumann--Wigner 
theorem \cite{vNW}. Indeed for the full family of traceless $2\times 2$ Hamiltonians in their standard
parameterization over $\R^3$, we reproduce that the locus of degeneracy is of codimension $3$.
More precisely, there is only one point $0\in \R^3$ in the preimage of the characteristic map 
restricted to the discriminant.

If the family is more complicated however, our methods tell us where degeneracies
can occur and how the levels cross. In this case the codimension $3$ is not universally true 
any more.
Among the graphs we study, we exhibit families, where the codimension of the degenerate locus
is $2$, $3$ or $1$. The precise dimension count comes from the 
intersection of the discriminant with the characteristic region and the dimension 
of the fibers of the characteristic map. For isolated singularities such as Dirac points
one needs that
the particular fiber of the characteristic map over the corresponding point
in the discriminant is 0--dimensional. This is true 
in the von Neumann--Wigner case and this fact corresponds to the ``extra equations'' as we
explain.

Our main aim of application is 
the commutative and
 non--commutative $C^*$--geometry of wire networks \cite{kkwk,kkwk2} in their description
as graph Hamiltonians.
The input data for the general theory are a graph $\Gamma$, which is embedded in $\R^n$, 
the crystal graph, together with a maximal symmetry group $L$ isomorphic to $\Z^n$ and a constant magnetic field 2--form given by a skew symmetric matrix 
$\Theta$.
A fundamental role in the whole theory is played by the abstract quotient 
graph $\bar \Gamma:=\Gamma/L$.
This graph, together with the induced data of the magnetic field and the 
embedding, was used to define the Harper Hamiltonian and the relevant $C^*$ 
algebra $\B$. This algebra which we called the Bellissard--Harper algebra 
is the {\em minimal} $C^*$--algebra generated by the magnetic translations
corresponding to $L$ and the Harper Hamiltonian $H$.
In \cite{kkwk2} we showed that $\B$ embeds into  $M_k(\TTheta^n)$, the $k\times k$ matrices
over the non--commutative torus $\TTheta^n$.
Here $\Theta$ contains the information of the $B$--field for the lattice $\Gamma$ and
 $k$ is the number of vertices of $\bar \Gamma$ which is the number of sites in a primitive cell.

One intriguing aspect is that this
description is very useful even in the commutative case, that is in the 
absence of a magnetic field. The $C^*$ approach yields a family of finite dimensional
Hamiltonians parameterized over a base torus $T^n=S^1\times \dots \times S^1$.
Namely, in the commutative case $\Theta=0$ and 
 $\T^n_0$ is the $C^*$ algebra of complex valued continuous 
functions of  the $n$--torus $T^n=S^1\times \dots \times S^1$. In standard
 notation $\T^n_0=C(T^n)$.
Likewise, using the Gel'fand--Naimark theorem the Bellissard algebra $\B$ is also the
$C^*$ algebra of a certain compact Hausdorff space $X$; $\B=C(X)$, which as 
we showed in \cite{kkwk}
is a branched cover over $T^n$. Physically the base $T^n$ parameterizes the momenta and
the space $X$ is given by the energies of $H$ as these momenta vary. In this sense 
they give the energy bands of a one-electron system.

The central questions that arise are the following: At what points do we 
have degenerate eigenvalues in the spectrum and which of these points
are Dirac points? And, can these be read off from the  
graph $\bar \Gamma$ and its decorations, such as a spanning tree 
or weights with values in some $\TTheta^n$?

Given a specific graph with $C^*$--algebra valued weights on the edges, 
this is done by analyzing the function $P$ and the characteristic map from the base torus $T^n$
 to the miniversal unfolding of the $A_{k-1}$ singularity.
We apply  these considerations to the cases of the Double Gyroid wire network which was our 
 initial interest
and, to illustrate the concepts and the possible behaviors, we consider several other 
examples along the way including the wire networks obtained from the double versions of the P and D 
surfaces as well as the honeycomb lattice.

Our main result here is an analytic proof that the spectrum of the Gyroid has four singular fibers,
two of which are $A_2$ singularities and two of which are of the type $(A_1,A_1)$. We furthermore
show that the latter two are Dirac points. 
This is the first analytic proof of this fact.
The relevant family of Hamiltonians also arises in a different context \cite{Avron}. There
the authors found numerically that the singular points lie on the diagonal 
of $B=T^3$ (viewed as a cube with opposite faces identified) and obtained
the spectrum on the diagonal, see also  \S\ref{gyroidsec} and \cite{sym}. We note that although this shows that
on the {\em one--dimensional sub--family} of Hamiltonians given by the diagonal, there are
two triple degeneracies and two two--fold double degeneracies,  using
this information alone one
cannot conclude how the singular structure extends onto the full 3--dimensional torus. Our method gives this extension. 
In particular, it allows us to show analytically that there are no degeneracies anywhere else in $T^3$.

Applying our program to the honeycomb lattice, 
we rediscover the well--known Dirac points of 
graphene \cite{Wallace}. Thus it can be hoped  that the Dirac points 
of the Gyroid will give rise
to important new material properties.

Another natural question in the  case of
graph Hamiltonians, 
which  is  addressed in a separate 
paper \cite{sym}, is if there are symmetries
that can be derived from the graph setup which explain the degeneracies.
The short answer is that (a) the symmetry group must be extended beyond the
permutation symmetry of the graph to include phase transformations on the vertices
(``re-gaugings''), and 
(b) in all the wire network cases corresponding to the 2 and 3 dimensional
self--dual graphs:
 P, D, G and honeycomb these symmetries force all the singularities.

\section{Singularities in spectra of families of Hamiltonians}
\subsection{Singularities}
We briefly recall pertinent definition of  singularity theory \cite{arnoldbook}.
In this theory one considers germs of smooth functions $f:\C^n \to \C$ with critical point at ${\bf 0}$
with critical value $0$ up to equivalence induced by germs of diffeomorphisms.
That is the germ $(f,{\bf 0})$ is equivalent to a germ $(f',{\bf 0})$
if  there exists a germ $(g,0)$ of a diffeomorphism $g:\C^n\to \C^n$ with $g({\bf 0})={\bf 0}$ such that
$f=f'\circ g$.
A {\em singularity} is  an equivalence class of such germs.
The germ $(f',0)$ is then also called the pull--back under $g$.

Analogous definitions hold in the case of complex singularities, that is for germs
functions $f:\C^n\to \C$
with diffeomorphisms replaced by biholomorphic maps.

A deformation of a germ $f:\C^n\to \C$ with base $\Lambda=\C^k$ is the germ at zero of a smooth map
$F:(\C^n\times \C^k\to \C,0)$ which satisfies $F(x,0)=f(x)$.
A deformation $F'$ is equivalent to $F$ if there is a smooth germ of a
diffeomorphism $g:\C^n\times\C^k\to \C^n\times \C^k$ at zero with $g(x,0)=x$.

Given a deformation $F$ with base $\Lambda$ a smooth germ $\theta:(\C^r,0)\to (\Lambda,0)$ {\em induces} a deformation $F'$ via pull--back. $F'(x,\lambda'):=F(x,\theta(\lambda))$.

A deformation $F$ of a germ $f$ is {\em versal} if every deformation $F'$ of $f$ is equivalent to a deformation induced from $F$. It is called {\em miniversal} if $\Lambda$ is of minimal dimension.

Again one can replace $\C$ with $\R$.
Also, one can pull--back not only germs, but actual pointed families by
the same procedure. Here the base spaces $\C^k$  
are then simply replaced by  smooth manifolds $B$. This yields the
same local theory.

\subsection{The spectrum as a zero locus}
\label{zerolocsec}
The basic starting point of our analysis in this section is that
given a smooth family of Hamiltonians over a base $B$ the spectrum as
functions on $B$ 
can be alternatively given as the zero locus of a single function $P$ on 
$B\times \R$.
The map that associates to a point $b$ in the base $B$
 the Hamiltonian $H(b)$    is a smooth
map $H:B\to Herm(k)$,
where $Herm(k)$ are the Hermitian $k\times k$ matrices. Now since
the matrices depend differentiably on the parameters, varying these eigenvalues
gives rise to a cover $\pi:X\to B$. Here a priori the cover is just that of
sets, but one can quickly show that this is a cover of topological spaces, e.g.\ using $C^*$ geometry, see \S\ref{coversection}.

The inverse image of a point $B$ under $\pi$
is the set of eigenvalues of $H(b)$. 
As these are the zeros of the characteristic polynomial
of $H(b)$, we obtain another description of $X$ as a subset in 
an ambient differentiable manifold as follows.

Consider the trivial cover $B \times \C \to B$ and its real part
$B\times \R\to B$. On the space $B\times \C$, we
consider the function $P:B\times \C\to \C$ given by $P(b,z):=det(zId-H(b))$.

The zero locus of $P$ is exactly $X$. 
We chose the normalization, so that $P$ starts
with $+z^k$. Fiberwise for the zero locus,
we just get the eigenvalues of $H(b)$,
and since $H(b)$ is Hermitian, we know that we have real  eigenvalues and
hence $X$ is also the zero locus of $P(z,b)$  contained in $B\times \R$.

The description above gives $X$ as a singular manifold. Around points of $X$
at which  $P$ does not have a critical value $0$,
$P^{-1}(0)$ is a smooth manifold. 
The points at which $P$ is critical with critical value $0$ are singular.

Physically $P^{-1}(0)$ are just the energy levels.
A level crossing can occur only at critical points of $P$ with critical value $0$.
Namely, if we fix $(b,z)\in B\times \R$
 and $0$ is a non--critical value in a small neighborhood $U$ of $(b,z)$ 
then $P|_U^{-1}(0)$ is a smooth manifold.
More precisely, if $\frac{\del P}{\del z}\neq 0$ the implicit function 
theorem states that
for $(b_0,z_0)\in U$ with $P(b_0,z_0)=0$ there exists a 
function $z=E(b)$ such that $P(b,E(b))=0$
and the graph $z=E(b)$ is the component of the 
smooth manifold $P|_U^{-1}(0)$ containing $(b_0,z_0)$. In other words:
$E(t)$ is the dispersion relation.

Our main results describe the singular locus of the singular manifold $X$. 
There are two approaches
we will take. First we can look at the singularities 
of $X$ locally where we regard $X$ as embedded in $B\times \R$. 
We use this to find Dirac points, see \S \ref{diracsection}. 
Secondly, we can look at the cover $\pi:B\times \R\to B$
restricted to $X$, that is $\pi:X\to B$. The upshot is that locally around
a singular point $x$, $X$ is the  deformation of the singularity 
of the fiber over $\pi(x)$.
By Grothendieck, locally in a fiber 
the only  singularities that can appear are of type $A_n$ with $n\leq k-1$.
This point of view lets us classify these singularities and their deformations
by means of the miniversal unfolding of the $A_{k-1}$ singularity, 
see \S\ref{swallowsection}.

\subsection{Dirac points as Morse or $A_1$ singularities}
\label{diracsection}
From the point of view of material properties, one of the most interesting singularities that can occur are Dirac points.
In the general terminology, a Dirac point is a conical singularity in the spectrum when two levels cross such that the dispersion
 relation is linear.

 In our situation, this can be formalized in order to yield analytical tools to find and classify these points without solving the Eigenvalue equations. Instead of actually finding an expansion,
  we will use a smooth ambient space to characterize Dirac points using singularity theory.

 The standard cone is of the form $z^2=\sum_{i=1}^k t^2_i$. Here as above we fix the sign
of the coordinate $z$ to be positive.
 We can rewrite this as $F({\bf t},z)=0$ where $F({\bf t},z)=z^2-\sum_{i=1}^k t_i^2$.
The characteristic features of the function are that $f$ has a critical point at
${\bf 0}=(0,\dots,0)$ with critical value $0$.
Moreover the Hessian, i.e.\ the matrix of its second derivatives, is a quadratic form with signature $(-\dots-+)$. In particular
its determinant $hess=det(Hess)\neq 0$. In general if $f$ has a critical point and $hess\neq 0$
the critical point is called a Morse critical point.

The most pertinent theorem about Morse critical points is the Morse Lemma.
\begin{thm}
\label{morselemma}
\cite{morsebook}
In a neighborhood of a nondegenerate critical point $p$ of a smooth function $F:M\to \R$ from an $n$--dimensional manifold $M$
there are coordinates
$x_i$ centered at $p$, such that in these coordinates 
$$F=-x_1^2-x_2^2-\cdots -x_\lambda^2+x_{\lambda+1}^2+\cdots +x_n^2+f(p)$$ 
where, $\lambda$ is the index of the critical point. 
\end{thm}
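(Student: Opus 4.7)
The plan is to reduce to a local model by Taylor expansion and then diagonalize the resulting smoothly varying quadratic form by an inductive ``completing the square'' argument. Since the statement is local, I can replace $M$ by a neighborhood of $p$ in $\mathbb{R}^n$, choose initial coordinates $y_1,\dots,y_n$ centered at $p$, and subtract the constant $f(p)$ so that $F(0)=0$.

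First I would apply Hadamard's lemma twice. Writing $F(y)=\int_0^1\tfrac{d}{dt}F(ty)\,dt=\sum_i y_i\int_0^1 F_i(ty)\,dt$ and using that $F_i(0)=0$ because $p$ is critical, a second application yields a smooth representation
\begin{equation*}
F(y)=\sum_{i,j=1}^n y_iy_j\, h_{ij}(y),
\end{equation*}
with $h_{ij}$ smooth, symmetric after averaging, and $h_{ij}(0)=\tfrac12\partial_i\partial_j F(0)$. Thus the matrix $(h_{ij}(0))$ is, up to a factor of $1/2$, the Hessian, so it is symmetric and nondegenerate.

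Next I would diagonalize inductively. Suppose by induction that on some neighborhood we have coordinates $x_1,\dots,x_{k-1},y_k,\dots,y_n$ in which
\begin{equation*}
F=\pm x_1^2\pm\cdots\pm x_{k-1}^2+\sum_{i,j\ge k}y_iy_j\, \tilde h_{ij}(x,y),
\end{equation*}
with $(\tilde h_{ij}(0))_{i,j\ge k}$ symmetric and nondegenerate. After a preliminary linear change in the $y_k,\dots,y_n$ block (which preserves the form of the expression), I may assume $\tilde h_{kk}(0)\ne 0$, hence $\tilde h_{kk}$ is nonvanishing on a smaller neighborhood. Define
\begin{equation*}
x_k:=\sqrt{|\tilde h_{kk}(x,y)|}\Bigl(y_k+\sum_{j>k}\frac{\tilde h_{kj}(x,y)}{\tilde h_{kk}(x,y)}y_j\Bigr).
\end{equation*}
Completing the square replaces the $y_k$--row and column by $\pm x_k^2$, with sign $\operatorname{sgn}\tilde h_{kk}(0)$, and leaves a smooth nondegenerate quadratic form in $y_{k+1},\dots,y_n$ with coefficients depending on all variables. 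The inverse function theorem, applied to the map $(x_1,\dots,x_{k-1},y_k,\dots,y_n)\mapsto(x_1,\dots,x_k,y_{k+1},\dots,y_n)$, verifies that this is a smooth local change of coordinates because its Jacobian at the origin is triangular with nonzero diagonal. After $n$ steps the desired normal form is achieved, and the number $\lambda$ of minus signs equals the index by Sylvester's law of inertia applied to the Hessian, which is preserved under smooth coordinate changes at a critical point.

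The main obstacle is the inductive step: one must verify that the substitution defining $x_k$ is a genuine smooth diffeomorphism near the origin (not merely a formal completion of the square) and that the residual quadratic form in the remaining variables stays smooth and nondegenerate so the induction can proceed. Both reduce to the observation that $\tilde h_{kk}$ is smooth and nonvanishing in a neighborhood of $0$, which in turn depends on a preliminary linear coordinate change in each block to guarantee a nonzero diagonal entry of the Hessian; this is always possible because the block is symmetric and nondegenerate.
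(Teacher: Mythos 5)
The paper does not prove this theorem; it states it as a quoted result with the citation \cite{morsebook} (Milnor's \emph{Morse Theory}), so there is no in-paper argument to compare against. Your proof is correct, and it is in fact the standard proof from that very reference: two applications of Hadamard's lemma produce the representation $F=\sum y_iy_jh_{ij}(y)$ with $(h_{ij}(0))$ proportional to the Hessian; an inductive completion of the square, with a preliminary linear change to ensure $\tilde h_{kk}(0)\neq 0$ (possible because a real symmetric nondegenerate form cannot vanish identically on the diagonal after a linear change), defines the new coordinate $x_k$; the inverse function theorem validates the change of coordinates because the Jacobian is triangular at the origin with nonzero diagonal; and the count of signs is pinned down by Sylvester's law of inertia. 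The one point you state but do not fully spell out --- that the residual block remains nondegenerate after each step --- is exactly the nondegeneracy of the Schur complement of $\tilde h_{kk}(0)$ in the symmetric matrix $(\tilde h_{ij}(0))_{i,j\ge k}$, which holds since $\tilde h_{kk}(0)\neq0$; this is a routine fact and does not constitute a gap.
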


Now we see that a Dirac point as a germ is equivalent to the  germ of $(f,0)$
above, whose index $\lambda$ is $n-1$ or $-1$ if one switches the sign of $F$,
that is if one regards $-F=0$. Let us for the moment assume that the sign of $F$ 
is chosen such
that the signature is $(-\dots-+)$ in the order of coordinates above.

Such a germ is  the pull--back under some diffeomorphism on the ambient space
and the cone itself is the zero set of the function $F$.
Notice that the characteristic properties of being a 
Morse critical point are invariant
under the diffeomorphism. 
The signature has a geometric meaning. It says that the cone 
 opens up on the $x_n$--axis.
The dispersion 
relation in the new coordinates is just the pull--back and
hence also linear.

In our particular case, that of the fibration  $B\times \R\to B$ the function $F=P$ and the role 
of $x_{n}$ should basically be that
of the coordinate $z$ on the  fiber $B\times \R\to B$
and $(x_1,\dots ,x_{n-1})$ should correspond
to the  variables on the n--dimensional base, so that we indeed
get a physically sensible dispersion relation $z=E(b)$.
Since we are working with germs,
we do not distinguish between a local neighborhood in $B$ and the image
of its local charts in $\R^n$.

Now the cone has the right orientation
 as long as coefficient $\frac{\del^2P}{\del z^2}>0$, as this
states that the $z$--direction lies in the positive part of the cone.
  The dispersion relation
can depend on the direction that is the cone might be deformed and tilted. The
tilting of the cone is given by the partial derivatives
 $\frac{\del^2P}{\del z\del b_i}$. The cone is
untilted if  $\frac{\del^2f}{\del z\del b_i}=0$ for the base coordinates $b_i$.
Rephrased, we need that $T(B\times \R)=TB\oplus T\R$ is a decomposition into
a negative definite and a positive definite subspace.

Recall that a stabilization of a singular germ $f({\bf z})$ is a function $f({\bf z},{\bf w})\pm w_1^2 \pm \dots \pm w_m^2$. Thus the Morse singularities 
are then stably equivalent to the $A_1$ singularity given by the 
germ $f(z)=z^2$.

\subsubsection{Singularity Characterization of Dirac Points}
Therefore, we get the Dirac points in the 
spectrum are precisely the critical points 
with critical value $0$
of $P$ which are stabilizations of an $A_1$ singularity 
with signature $(-\dots-+)$ or $(+\dots+-)$ in the coordinates
$(b_1,\dots,b_n,z)$ such that  $T(B\times \R)=TB\oplus T\R$ is a decomposition into
a negative definite and a positive definite subspace.
We can now take this as their {\em definition}.

Practically this means that we have to simultaneously 
solve the equations $P=0,\nabla P=0$
and then check $hess\neq 0$ and moreover check that the signature is correct, 
by computing the principal  minors and check that $\frac{\del^2P}{\del z^2}=0$.

\label{swallowsection}
\subsection{The spectrum as a pull--back from the miniversal unfolding of the $A_{k-1}$ singularity}

 \subsubsection{The miniversal unfolding of $A_{k-1}$ and the swallowtail}
The $A_{k-1}$ singularity is the singularity defined by the function $f(z)=z^k$, 
which has a critical point of order $k-1$ at $0$. 
Its miniversal deformation \cite{arnoldbook}  is
\begin{equation}
\label{Aneq}
F(a,z)=z^k+a_{k-2}z^{k-2}+\dots +a_0
\end{equation}
 According to the general theory, the dimension of a miniversal deformation
  coincides with the
 dimension of the Milnor ring $\C[z]/(f')$ (also called the Milnor number) 
and the terms which are added to $f$ are in 1-1 correspondence 
with the vector space basis $(1,z,z^2,\dots,z^{k-2})$ of this ring.

 The geometry of the situation is very similar to the cover $\pi$ considered introduced in
\S\ref{zerolocsec}. In particular, the function $F(a,z)$ is a function on 
$\C^{k-1}\times \C$. 
Let $Y:=\{(a,z):F(a,z)=0\}\subset \C^{k-1}\times \C$ and
 considering the trivial bundle $\C^{k-1}\times \C\to \C^{k-1}$. 
Again we get a branched 
cover $\pi:Y\to \C^{k-1}$.
  The inverse image under $\pi$ is the set of  roots of the polynomial. 
Generically there are $k$ of these
  roots. However, over a subset $D\subset \C^{k-1}$ of the base space 
the number of inverse images drops as
 there are multiple roots. This set is known as the discriminant locus, 
the swallowtail or the level bifurcation set  and has been extensively 
studied (see \cite{arnoldbook,GKZ}). 
It is the zero set of the discriminant of the polynomial $F(z):=F(z,a)$ which is
  considered as a polynomial with arbitrary coefficients. 
The discriminant is a simple polynomial in the $a_i$
  and its zero set has codimension $1$ \cite{GKZ}.
   \begin{figure}
\includegraphics[width=.3\textwidth]{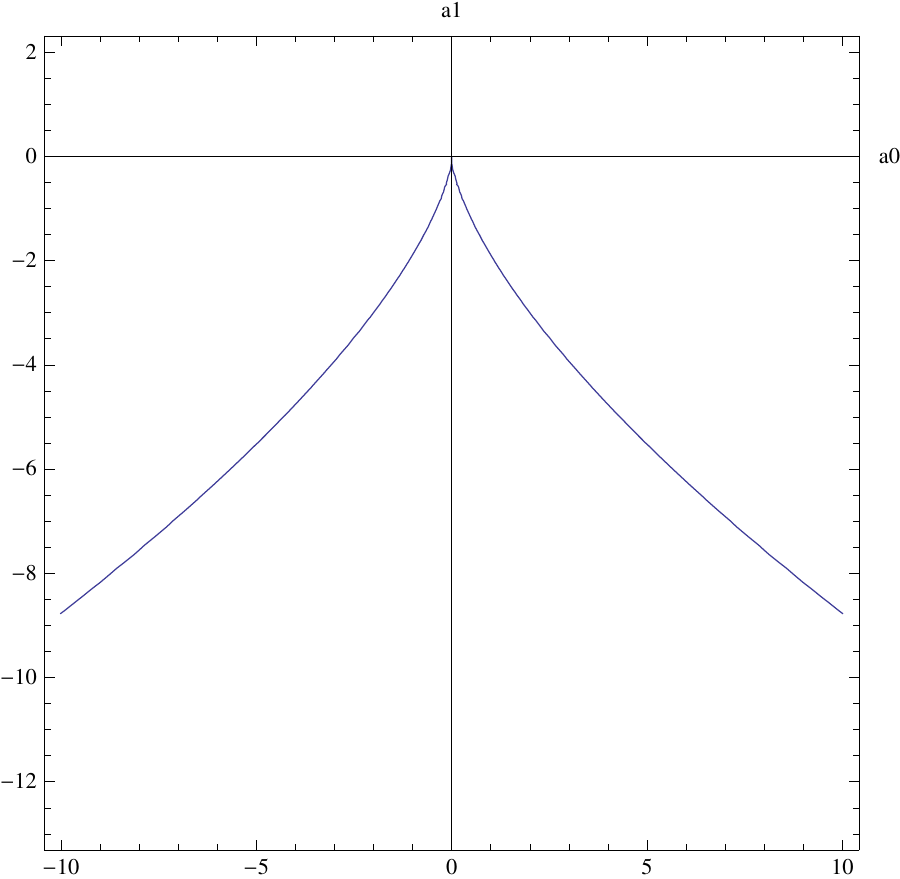}
\hspace{1cm}
\includegraphics[width=.5\textwidth]{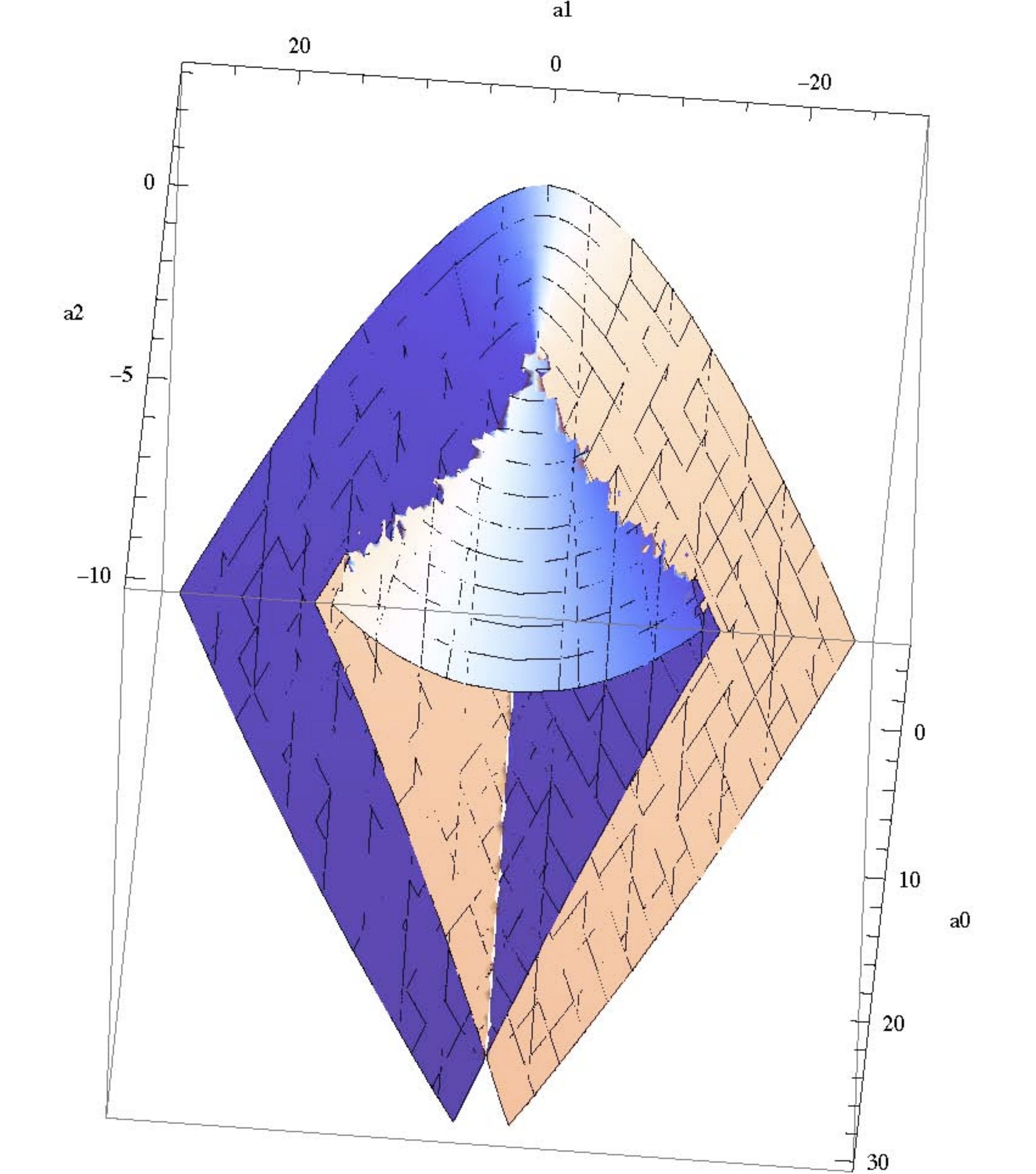}
\caption {Zero locus in the $A_2$ and $A_3$ singularities}
 \label{sing}
\end{figure}

 Pictures for this locus in the $A_2$ and $A_3$ are shown in Figure \ref{sing}. 
The surface $D$ in the $A_3$ singularity is known as the swallowtail. 
In general $D$ goes by the names of discriminant, level-bifurcation set
 or also again as the swallowtail.

\subsubsection{The spectrum as a pull--back}
\label{mainthmsec}
 Consider the trivial bundle $B\times \C\to B$ as before
and let $P(b,z)$ be defined as before. We expand

\begin{equation}
\label{peq}
P(b,z)=z^k+a_{k-1}(b)z^{k-1}+a_{k-2}(b)z^{k-2}+\dots +a_0(b)
\end{equation}
where now the coefficients $a_i:B\to \R$ are real since the matrix $H(b)$
is Hermitian. This has great similarity to Eq.\ (\ref{Aneq}), except for
the second leading term not vanishing.

 However a simple invertible
smooth transformation $s:z\mapsto z-a_{k-1}/k$ yields a polynomial of this type. 
In our setup, the transformation $s$ gives
a diffeomorphism $g:B\times \R \to B\times \R$
 given by $(b,z)\mapsto (b,z -a_{k-1}(b)/k)$. This  gives an 
equivalence between $P$ and $\hat P:=P\circ g$.
Now $\hat P$ expands as

\begin{equation}
\label{phateq}
\hat P(b,z)=z^k+\hat a_{k-2}(b)z^{k-2}+\dots +\hat a_0(b)
\end{equation}

Let $\chr=(a_0,\dots,a_{k-2}):B\to \R^{k-1}$,
 then the miniversal
unfolding $F$ pulls back via $\chr$
to  the  deformation $\chr^*(F):B\times \C\to \C$ given 
by $\chr^*(F)(b,z)=\hat P(\hat a(b),z)$. In other words $\hat P$ is the 
deformation induced from the miniversal deformation $F$ via $\chr$.

We define {\em the characteristic map}  to be the coefficient map
$\chr:B\to \C^{k-1}$ and the {\em  characteristic region $R$} 
 to be the image of $\chr$.

Let $\chr^*(Y)\subset B\times \C$  be the zero locus of $\chr^*(F)$, 
then we get  a pullback of  
the map $\pi$: $\chr^*(\pi):\chr^*(Y)\to B$ by restricting the projection.

Notice that if the Hamiltonians $H(b)$ are traceless then $a_{k-1}\equiv 0$ and
$P=\hat P$.
Summing up we have the following:
 \begin{thm}
 \label{mainthm}
 The branched
cover $X\to B$ is equivalent via $g$ to the pull back of
 the miniversal unfolding of the $A_{k-1}$
 singularity along the characteristic map $\chr$.

Moreover if the family of Hamiltonians is traceless, the cover
is the pull--back on the nose.
\end{thm}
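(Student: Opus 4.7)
The plan is to verify the theorem by a direct unwinding of definitions; no deep singularity theory is needed beyond the setup already in place. First I would check that the map $g:B\times\R\to B\times\R$ defined by $(b,z)\mapsto (b,z-a_{k-1}(b)/k)$ is a diffeomorphism. Its smooth inverse is $(b,z)\mapsto(b,z+a_{k-1}(b)/k)$, with $a_{k-1}(b)=-\mathrm{tr}\,H(b)$ a smooth function of $b$. Moreover $g$ is fibered over $B$ and commutes with the projection $\pi:B\times\R\to B$, so it preserves the branched cover structure and will therefore carry $X$ to an equivalent branched cover of $B$.

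Next I would carry out the standard Tschirnhaus depression to confirm the form of $\hat P=P\circ g$ stated in equation (\ref{phateq}). The cleanest argument is that if $\lambda_1(b),\dots,\lambda_k(b)$ are the eigenvalues of $H(b)$, then the roots of $\hat P(b,\cdot)$ are $\lambda_i(b)+a_{k-1}(b)/k$. Since $\sum_i\lambda_i(b)=-a_{k-1}(b)$, these shifted roots sum to zero, so the coefficient of $z^{k-1}$ in $\hat P$ vanishes. Alternatively, one expands $(z-a_{k-1}/k)^j$ by the binomial theorem and collects terms. Either way one gets smooth functions $\hat a_j:B\to\R$ with
\[
\hat P(b,z)=z^k+\hat a_{k-2}(b)z^{k-2}+\cdots+\hat a_0(b).
\]

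The identification with the pullback is then tautological. The characteristic map $\chr(b)=(\hat a_0(b),\dots,\hat a_{k-2}(b))$ takes values in $\R^{k-1}\subset\C^{k-1}$, and evaluating the miniversal unfolding $F(a,z)=z^k+a_{k-2}z^{k-2}+\cdots+a_0$ at $a=\chr(b)$ reproduces $\hat P(b,z)$; that is, $\hat P=\chr^*(F)$ by definition. Hence the zero locus of $\hat P$ in $B\times\R$ is exactly $\chr^*(Y)$, and $g$ carries $X=P^{-1}(0)$ diffeomorphically onto $\hat P^{-1}(0)=\chr^*(Y)$ while covering the identity on $B$. For the traceless addendum, one simply notes that $\mathrm{tr}\,H(b)\equiv 0$ forces $a_{k-1}\equiv 0$, so $g=\mathrm{id}$ and $P=\hat P=\chr^*(F)$ on the nose. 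No step presents a serious obstacle: the theorem is essentially the observation that the characteristic polynomial of a smooth Hermitian family, after depressing its $z^{k-1}$ term, is by construction a smoothly varying point in the base of the miniversal unfolding of $A_{k-1}$. The only mild point to track is that smoothness of the $\hat a_j$ in $b$ is inherited from that of the $a_j$, which are polynomial in the entries of $H(b)$.
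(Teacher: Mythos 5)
Your proposal is correct and follows essentially the same route as the paper: the Tschirnhaus shift $z\mapsto z-a_{k-1}(b)/k$ defines the fibered diffeomorphism $g$, the depressed polynomial $\hat P$ has its $z^{k-1}$ term killed, and $\hat P=\chr^*(F)$ is then tautological, with the traceless case being $g=\mathrm{id}$. The only thing you add is the explicit eigenvalue-sum justification that the $z^{k-1}$ coefficient vanishes (and a correct use of $\hat a_j$ rather than $a_j$ in the definition of $\chr$, where the paper's notation is slightly loose), but this does not change the argument.
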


\begin{cor}
The possible degeneracies are pull-backs of
those which appear in the miniversal unfolding of $A_{k-1}$.
Moreover, these singularities are present over the fibers over the real part of the discriminant.
\end{cor}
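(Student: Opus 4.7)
The plan is to deduce both assertions directly from Theorem \ref{mainthm}, which already identifies $X$ (up to the ambient diffeomorphism $g$) with the pullback $\chr^*(Y)$ of the swallowtail cover $\pi:Y\to \C^{k-1}$ along the characteristic map $\chr:B\to \C^{k-1}$. So no new analysis is needed; the corollary is essentially a bookkeeping statement about pullbacks.

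First, I would observe that pullback of a branched cover commutes with passage to fibers: the fiber of $\chr^*(\pi)$ over $b\in B$ is, set-theoretically and as a zero locus in $\{b\}\times \C$, the fiber of $\pi$ over $\chr(b)\in \C^{k-1}$. Consequently, the fiber singularity of $\chr^*(Y)\to B$ at $(b,z)$ is exactly the germ-theoretic pullback, along the induced map of germs $(B,b)\to (\C^{k-1},\chr(b))$, of the fiber singularity of $Y\to \C^{k-1}$ at $(\chr(b),z)$. Via the diffeomorphism $g$ supplied by Theorem \ref{mainthm}, the same holds for the original cover $X\to B$. This establishes the first assertion that every degeneracy of $X$ is a pullback of one occurring in the miniversal unfolding of $A_{k-1}$, and, invoking the Grothendieck classification recalled in the introduction, identifies the list of possible types as the clusters $(A_{n_1},\dots,A_{n_r})$ arising from subdiagrams of $A_{k-1}$.

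For the second assertion, I would recall that the fiber of $\pi$ over $a\in\C^{k-1}$ is singular iff the polynomial $F(a,z)$ has a multiple root, which by the very definition of the discriminant is equivalent to $a$ lying on the hypersurface $D\subset\C^{k-1}$. Therefore, the singular fibers of $\chr^*(\pi)$ sit exactly over $\chr^{-1}(D)$. The final step is to use that $H(b)$ is Hermitian, hence its characteristic polynomial has real coefficients, so the characteristic map factors as $\chr:B\to \R^{k-1}\hookrightarrow \C^{k-1}$. In particular $\chr(B)\cap D=\chr(B)\cap (D\cap \R^{k-1})$, and the degenerate fibers of $X$ sit precisely over the preimage of the real part of the discriminant, as claimed.

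The only mildly subtle point, which I would flag but not belabor, is making sure that the notion of singularity intended is the one coming from the ambient-embedded zero locus description of \S\ref{zerolocsec} rather than any intrinsic notion on $X$; once this is fixed, the pullback compatibility above is immediate because $P$ and $\hat P$ differ by the ambient diffeomorphism $g$, and $\hat P=\chr^*F$ on the nose. I do not expect any genuine obstacle: the corollary is really just the restatement of Theorem \ref{mainthm} together with the real-coefficients observation.
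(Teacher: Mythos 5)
Your proposal is correct and follows the same route the paper implicitly intends: the paper states the corollary without proof immediately after Theorem \ref{mainthm}, treating it as a direct consequence of the pullback identification, and your bookkeeping about fibers of $\chr^*(\pi)$ matching fibers of $\pi$ over $\chr(b)$, plus the reality of $\chr$ coming from Hermiticity, is exactly the reasoning the paper leaves tacit (and later elaborates in the ``Characteristic region'' subsection).
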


Since $A_{k-1}$ is a simple singularity the following theorem of Grothendieck applies.
\begin{thm}\cite{grothendieck}
The types of  singularities which appear in the swallowtail of a simple singularity are exactly those
corresponding to the Dynkin diagrams obtained by deleting vertices and all the edges incident
to these vertices in the Dynkin diagram of the original singularity.
\end{thm}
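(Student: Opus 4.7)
The plan is to separate the proof into two stages of increasing technicality. For the $A_{k-1}$ case, which is the only one the rest of the paper actually uses, the miniversal unfolding (\ref{Aneq}) is a family of monic degree-$k$ polynomials in $z$ and everything reduces to the combinatorics of their factorizations. For the general ADE case, I would invoke the Brieskorn--Grothendieck--Slodowy framework and reduce the statement to the stratification of the Weyl group quotient.

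For $A_{k-1}$, fix a point $a$ in the discriminant and write
\[
F(a,z)=\prod_{j=1}^{r}(z-\alpha_j)^{m_j}, \qquad \alpha_j \text{ distinct}, \qquad \sum_{j=1}^r m_j = k.
\]
Near each $(a,\alpha_j)$ with $m_j\geq 2$, the Weierstrass preparation theorem (or a direct argument, since $F$ is already polynomial in $z$) identifies the germ of $F$ with $w^{m_j}$ times a unit in the local ring, so the fiber singularity at $\alpha_j$ is of type $A_{m_j-1}$. Hence the combined fiber singularity is the disjoint union $(A_{m_{j_1}-1},\dots,A_{m_{j_s}-1})$ taken over those $j$ with $m_j\geq 2$. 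To translate this into the Dynkin picture: the $A_{k-1}$ diagram is a chain of $k-1$ nodes, and a multiplicity vector $(m_1,\dots,m_r)$ with $\sum m_j=k$ corresponds to removing the $r-1$ separator nodes between consecutive blocks, leaving $r$ sub-chains of lengths $m_1-1,\dots,m_r-1$. Realizability in both directions is immediate, since every partition of $k$ is the multiplicity sequence of some polynomial in the family and conversely.

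The general ADE case is harder and is where the real obstacle lies. Here the plan is to use Brieskorn's theorem identifying the miniversal deformation of a simple singularity with the restriction of the adjoint quotient $\mathfrak{g}\to \mathfrak{h}/W$ to a Slodowy slice transverse to the subregular nilpotent orbit. The discriminant then coincides with the image of the non-regular semisimple locus in $\mathfrak{h}/W$, and its stratification is governed by the parabolic subgroups of the Weyl group $W$; these are in canonical bijection with subsets of the simple roots, i.e., with subdiagrams of the Dynkin diagram obtained by deleting vertices and incident edges. Over the stratum indexed by such a subset $J$, the Slodowy-slice fiber acquires exactly the simple singularity whose Dynkin diagram is the span of $J$, and this is what one needs to conclude. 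The main obstacle is precisely this second step: one must marshal the full Brieskorn--Slodowy apparatus (simultaneous resolution, transverse slices, the identification of miniversal deformations with the adjoint quotient) before the combinatorics becomes visible. By contrast, the $A_{k-1}$ case actually used in this paper is essentially a statement about factorizations of polynomials.
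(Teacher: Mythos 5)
The paper does not prove this theorem: it is cited from Demazure's Bourbaki expos\'e and used as a black box, so there is no in-paper argument to compare against. Taken on its own merits, your two-stage plan is sound. The $A_{k-1}$ argument is essentially complete and correct: near a root $\alpha_j$ of multiplicity $m_j\geq 2$ the polynomial $F(a,\cdot)$ is $(z-\alpha_j)^{m_j}$ times a local unit, hence an $A_{m_j-1}$ germ, and the translation to Dynkin combinatorics is exactly right --- a composition $m_1+\dots+m_r=k$ corresponds to deleting the $r-1$ separator nodes at positions $m_1,\,m_1+m_2,\dots,m_1+\dots+m_{r-1}$ in the $A_{k-1}$ chain, leaving sub-chains of lengths $m_j-1$, with the multiplicity-one blocks giving empty components. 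Realizability inside the specific family (\ref{Aneq}), with vanishing $z^{k-1}$ coefficient, follows by shifting any chosen root configuration by its average, which you implicitly use. This elementary factorization argument covers the only case the present paper actually applies. For the general ADE case your appeal to the Brieskorn--Slodowy picture (miniversal deformation realized as a Slodowy slice of the adjoint quotient $\mathfrak{g}\to\mathfrak{h}/W$, discriminant strata indexed by standard parabolic subgroups of $W$, hence by subsets of simple roots and so by full subdiagrams) is the standard route and is, in substance, what the cited reference does; as you say it is a sketch rather than a proof, since it rests on the simultaneous-resolution and transverse-slice machinery, but the outline is the right one and I see no gap in it.
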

If the resulting diagram is disconnected this means that there are several critical points with critical values $0$ in the fiber.
In fact there is a stratification of the swallowtail $\Sigma$ into strata $\Sigma(X_1,\dots, X_n)$
according to the types $X_i$. The above theorem then states which strata are non--empty.

If we take the $A_k$ Dynkin diagram and delete $l$ points, there are at most $l+1$ connected components
all of type $A_r$ for some $r$ and the sum of their Milnor numbers $r$ is at most $k-l$. Deleting
points at the edges or next to each other, we can make the individual Minor numbers smaller.

\begin{cor}
The only possible types of singularities for nonloop graphs in the spectrum are $(A_{r_1}, \dots, A_{r_s})$ with $\sum r_i\leq k-s$.
\end{cor}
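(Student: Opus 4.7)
The plan is to deduce the corollary directly from the two results immediately preceding it, namely the earlier corollary that the possible degeneracies are pull--backs of those in the miniversal unfolding of $A_{k-1}$, and Grothendieck's theorem characterizing the singular fibers in this unfolding as those whose Dynkin diagrams arise from the $A_{k-1}$ diagram by deleting vertices (and all incident edges). Thus the problem reduces to a purely combinatorial one about the $A_{k-1}$ Dynkin diagram, which is a path on $k-1$ vertices.

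First I would fix notation: suppose we delete a set $S$ of $l$ vertices from the path $A_{k-1}$. Since removing a vertex from a path either splits a connected component in two or shortens/eliminates a component at an endpoint, an easy induction on $l$ (or equivalently a direct count of the ``gaps'' between consecutive deleted vertices, together with the two possible endpieces) shows that the number $s$ of connected components of the resulting diagram satisfies $s \leq l+1$. Each surviving component is itself a path, so it is again of type $A_{r_i}$ for some $r_i \geq 1$, where $r_i$ is its number of vertices (this $r_i$ is at once the Milnor number of the corresponding singularity and the length of the diagram).

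Next I would do the counting. Since the deleted vertices and the surviving vertices together exhaust the original $k-1$ vertices, $\sum_{i=1}^s r_i = (k-1) - l$. Combining this with $l \geq s-1$ gives
\begin{equation*}
\sum_{i=1}^s r_i \;=\; k-1-l \;\leq\; k-1-(s-1) \;=\; k-s,
\end{equation*}
which is precisely the claimed bound. Conversely, by the previous corollary, any singularity type appearing in the spectrum of the family is of the form dictated by Grothendieck's theorem, so it must be of the form $(A_{r_1},\dots,A_{r_s})$ subject to this inequality.

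I do not anticipate a substantive obstacle here: once the two cited results are in hand, the argument is a one--line combinatorial observation about breaking a path by removing vertices. The only point that warrants a little care is the bookkeeping convention that $A_r$ has Milnor number $r$ and corresponds to a connected sub--path with $r$ vertices, so that the sum $\sum r_i$ really counts surviving vertices; with this convention fixed, the inequality $\sum r_i \leq k-s$ is immediate.
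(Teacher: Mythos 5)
Your argument is correct and is essentially the same as the paper's: both deduce the bound from Grothendieck's theorem by deleting $l$ vertices from the $A_{k-1}$ path, using $\sum r_i = (k-1)-l$ together with $s\le l+1$. If anything, your version is tidier, since the paper's preceding paragraph phrases the count for a generic $A_k$ and states ``at most $k-l$'' where the sum of surviving vertices is in fact exactly $k-l$; you correctly specialize to $A_{k-1}$ and make the equality explicit before applying $l\ge s-1$.
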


\begin{ex} In the unfolding of the $A_3$ singularity, we have an $A_3$ singularity at the origin,
we have $A_1$ singularities along the smooth part of the swallowtail corresponding
to deleting two points of $A_3$. Along the cusps of the swallowtail there are $A_2$ singularities
corresponding to the triple degeneracy
 of the roots corresponding to the right two vertices and the left two vertices and over the double points there are $(A_1, A_1)$ singularities  corresponding to deleting the middle point of $A_3$.
\end{ex}

\begin{cor}
Near any singular point $x\in X$ of $P$ there is a 
neighborhood of $x$ which is a deformation of
an $A_r$ singularity for some $r\leq k-1$.
\end{cor}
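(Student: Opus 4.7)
The plan is to combine Theorem~\ref{mainthm} with Grothendieck's theorem, pushing the entire statement to a purely local question on the miniversal unfolding of $A_{k-1}$. By Theorem~\ref{mainthm}, after applying the diffeomorphism $g$, the cover $X\to B$ is identified with the pull-back $\chr^*(Y)\to B$, where $Y\subset \C^{k-1}\times \C$ is the zero locus of the miniversal unfolding $F(a,z)$ of $A_{k-1}$ and $\chr:B\to \C^{k-1}$ is the characteristic map. It therefore suffices to show that near any singular point of $\chr^*(Y)$, the zero locus is the $\chr$-pull-back of a deformation of an $A_r$ germ with $r\leq k-1$.

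First I would set $b_0=\pi(x)$, $a_0=\chr(b_0)$, and let $z_0$ be the $\R$-coordinate of $g(x)$. Since $x$ is singular, $z_0$ is a multiple root of the single-variable polynomial $F(a_0,\cdot)$, of some multiplicity $r+1\leq k$. Applying Grothendieck's theorem (in the form already stated for the miniversal unfolding of $A_{k-1}$), the germ of $F$ at $(a_0,z_0)$ is, after a local change of fiber coordinate centered at $z_0$, an unfolding of the $A_r$ germ $z\mapsto z^{r+1}$ with $r\leq k-1$. Second, I would separate this critical point from the other critical points possibly lying in the same singular fiber: since the roots of $F(a,\cdot)$ depend continuously on $a$, one can choose a neighborhood $U$ of $z_0$ in $\R$ containing no other root of $F(a_0,\cdot)$ and a neighborhood $V$ of $a_0$ in $\C^{k-1}$ small enough that for every $a\in V$ only the roots deforming from $z_0$ lie in $U$. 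On $V\times U$ the restricted family $F$ is a genuine deformation of the $A_r$ germ, and pulling this restriction back along $\chr$ over $\chr^{-1}(V)$ exhibits a neighborhood of $x$ in $X$ as a deformation of an $A_r$ singularity parameterized by a neighborhood of $b_0$ in $B$.

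The only real obstacle is the localization step: one must verify that passing from the global miniversal unfolding to a neighborhood of a single critical point in a possibly multi-critical fiber genuinely yields a deformation of just that $A_r$ germ, rather than of the full multi-component singularity $(A_{r_1},\dots,A_{r_s})$ that Grothendieck's theorem allows for the whole fiber. This is the standard separation-of-critical-points argument: factor $F(a_0,z)$ near the fiber as $(z-z_0)^{r+1}$ times a factor nonvanishing at $z_0$, apply the implicit function theorem to the nonvanishing factor to keep the unrelated roots away for $a$ near $a_0$, and restrict to the resulting product neighborhood $V\times U$. This is implicit in the stratification of the swallowtail invoked before the corollary and requires nothing beyond continuity of roots; once carried out, the desired deformation description is immediate.
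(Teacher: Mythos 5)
Your argument is correct and follows essentially the same route as the paper: pull back the miniversal unfolding of $A_{k-1}$ along $\chr$ and restrict to a neighborhood of the component of the singular fiber containing $x$, which is exactly the separation-of-roots step you spell out. The only cosmetic difference is that you invoke Grothendieck's theorem for the local germ at $(a_0,z_0)$, whereas the identification of a multiplicity-$(r+1)$ root of a degree-$k$ polynomial with an $A_r$ germ (and $r\leq k-1$) is elementary and needs no such input; Grothendieck is only needed for the finer strata classification in the preceding corollaries.
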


\begin{proof}
If $x$ is a singular point, then $y=\chr(\pi(x))$ lies on the discriminant. 
Picking a neighborhood
of $y$ and pulling it back to $B$ and to $X$, we obtain the desired deformation
by restricting to the component that $x$ lies in.
\end{proof}

\subsection{Characteristic region}
Since the families we consider are given by Hermitian matrices,
they have positive eigenvalues. Hence, if $disc$ is the discriminant
function of $F$, then $disc\circ\chr\geq 0$ as this function is
is quadratic in the differences of the eigenvalues. Thus, we
get that the characteristic region is contained in the locus of $\Lambda$
over which the discriminant is non--negative.

 Notice that
if $B$ is compact connected, then the image under $\chr$ of $B$ 
is compact connected. This will be the case for graph Hamiltonians
where $B=T^n$, hence
 it will then also lie in the closure of a component of
  the real part of $\R^{n-1}\setminus D\cap R^{n-1}$ 
over which the discriminant is positive.

Since the discriminant is a simple polynomial,
the sign changes when crossing the discriminant locus.
This entails that the intersection of $D$ with the characteristic region
is only along its boundary $Bd(R)$. 
Thus if $p$ is an interior point of $R$
then all the fibers of $X$ over the inverse images 
of $p$ under $\chr$ are non--singular
and moreover there is a whole non--singular neighborhood of each fiber.

The fibers of $X$ that are singular all lie over points $b$ whose image 
$\chr(b)$
sits in $D\cap R\subset Bd(R)$.
Thus the characteristic region gives a useful aid in studying the singularities that occur.
If $R\cap D=\emptyset$ then there are no singularities. 
In low dimensions this is also
a great visualization tool.

Notice that if $B$ is connected, isolated intersection points only occur for
constant maps, which means that there are no degrees of freedom. 
This cannot happen in the crystal/wire case.

If there are non--isolated intersection points, 
then their inverse images are singularities. 
Namely, by the above, there are nearby fibers, 
where the number of pre--images is $k$.

Over a fiber in a component with an  $A_r$ singularity,
we hence know that there are transversal directions, 
where one Eigenvalue splits into $r+1$ different pre--images. Or
going into the singularity $r+1$ energy level coalesce.
Thus over a point in the stratum $(A_{n_1},\dots, A_{n_l})$ there  are
$k$ crossings of $n_1+1,\dots,n_l+1$ levels, respectively.

\subsection{Consistencies and necessary conditions}
There are several consistencies which one might exploit 
for analytic or numerical solutions.
Any pre--image $b$ of a point $p$ in the boundary 
of $R$ has to satisfy that  $J_{\chr(b)}$,
the Jacobian of $\chr$ at $b$, does not have maximal rank. 
This is of course not a sufficient condition.
We know that to get a singular point we need that $J_{\chr}(b)=0$. 
This again fits well with the
fact that the discriminant restricted to $R$ is $\geq 0$. 
So that $disc\circ \chr$ has a zero Jacobian.

In order  for an isolated singularity at $x\in X$, such as a Dirac point, 
to occur, a necessary condition is that the fiber
$\chr^{-1}(\chr(\pi(x)))$ is discrete. 
This takes care of the vertical direction, 
but of course there should also be no curve through $x$ 
transversal to the fibers mapping to the discriminant.
All these types of behaviors can be found in the examples we give.

Another nice consistency check is given by the discriminant of $P$ considered
as a polynomial in $z$. 
Since any singular point $x$ in the spectrum is a critical 
point with critical value zero, we must have $P(x)=P_z(x)=0$,
which means that discriminant $disc(P)(x)=0$. 
Denoting the discriminant of the $A_k$ singularity  by $disc$ as well,
we have  $disc(P)(x)=disc(\chr(\pi(x)))=0$.

\subsection{Standard von Neumann--Wigner Example}
We consider the family of Hamiltonians $H(a,b,c)=a\sigma_x+b\sigma_y+c\sigma_z$,
where $\sigma_x,\sigma_y,\sigma_z$ are the Pauli matrices.
This gives us a family with base $\R^3$. It is the full family of traceless Hermitian $2\times 2$ matrices.

The usual interpretation of von Neumann--Wigner is that for a single level crossing one can reduce to this
$2\times 2$ family. However, this is basically only true in a ``generic'' or abstract setting and not for any arbitrary particular family. 
The original article \cite{vNW} does not claim this, but rather computed the co--dimension of the space of Hermitian matrices with degenerate 
eigenvalues in the whole space of Hermitian matrices. This is where the pure dimension count takes place. 

In our setting the calculation proceeds as follows.
The function $P$ on $\R^3\times \R$  is given by $P(a,b,c,z)=z^2-a^2-b^2-c^2$. The singularity is 
$A_1$ which has a one dimensional base $\Lambda=\R$. The characteristic map is  the map $\chr:(a,b,c)\mapsto -a^2-b^2-c^2$.
The discriminant is just the point $0\in \R$ and we see that we have a level crossing over $\chr^{-1}(0)=(0,0,0)$.
And indeed, the inverse image is  zero dimensional and the codimension of its locus is $3$.
All other fibers of $\chr$ are of codimension $1$ as one would expect from just a count of equations.
Here we neatly see how the na\"ive equation count fails over the special fiber.

The extra dimension drop can be explained using Picard--Lefschetz theory as a vanishing sphere.
Likewise the ``diabolical'' nature of these points ---that is the behavior of wave functions when moved around the conical singularity---
can be explained via the classical monodromy operator \cite{arnoldbook}.

The fact that the singularity is conical can readily be checked in our framework. Indeed $P$ has an isolated critical point at $(0,0,0,0)$ 
with value $0$ and signature $(---+)$ of the Hessian.

Finally, if one looks at the even larger family,  $H(a,b,c,d)=a\sigma_x+b\sigma_y+c\sigma_z+d\,Id$ of all Hermitian $2\times 2$ matrices,
then one gets a family over $\R^4$ with $P(a,b,c,d,z)=z^2-2dz+d^2-a^2-b^2-c^2$. So in this situation one has to use the
shift $s: z\to z+d$ upon which $\hat P(a,b,c,d,z)=z^2-a^2-b^2-c^2$. The characteristic map $\chr(a,b,c,d)=-a^2-b^2-c^2$,
and we see that singularities are over $\chr^{-1}(0)=(0,0,0,d)$ so that the fiber is now one dimensional, but
still of codimension $3$. The singular locus in the spectrum is given by a conical singularity crossed a line and hence
not isolated. Indeed the Jacobian of $P$ vanishes along the line $(0,0,0,d,d)$ with critical value $0$
and the Hessian of $P$ is degenerate at these points.

\section{Graph Hamiltonian and Wire network setup}

We now discuss the families that come about by considering Hamiltonians obtained
from finite graphs with (commutative) $C^*$ algebra weights on the edges. These
in turn arise from wire networks of real materials. Here the $C^*$--algebra in question
is the noncommutative n--torus $\TTheta^n$ where $\Theta$ is a skew symmetric matrix that
 encodes the commutation relations of the $n$ unitary generators $U_i$ of $\TTheta$.
Physically it corresponds to a constant magnetic field $B$.

The initial setup for graph Hamiltonians works with a general $C^*$--algebra $\mathscr A$,
the relevant Hilbert space being an $\ell^2$ space.
In case  $\mathscr A$ is commutative, by the Gel'fand--Naimark theorem, we get another description
of the algebra that $\mathscr A$ and the Hamiltonian generate
as the set of levels of a family of finite dimensional Hamiltonians parameterized over a base.
For concreteness, we will set  ${\mathscr A}=\TTheta$, but
also comment on how to treat the general case.

\subsection{Hamiltonian from a finite graph}
In order to set up the general theory for graph Hamiltonians, 
we fix a finite graph $\bar \Gamma$,
a rooted spanning tree $\tau$ of $\bar \Gamma$, an order $<$ of the vertices of $\G$ such that
 the root of $\tau$ is the first vertex, a skew symmetric matrix $\Theta$, and a morphism
$w:\{\text{Directed edges of } \G\}\to \TTheta^n$ which satisfies the following
\begin{enumerate}
\item $w(\vec{e})=w(\cev{e})^*$
if $\vec{e}$ and $\cev{e}$ are the two orientations of an edge $e$.
\item $w(\vec{e})w(\cev{e})=1$

\item $w(\vec{e})=1\in \TTheta^n$ if the underlying edge $e$ is in the spanning tree.
\end{enumerate}

Let $k$ be the number of vertices of $\Gamma$. We will enumerate
the vertices $v_0,\dots, v_{k-1}$ according to their order; $v_0$ being the root.
Given this data, the Hamiltonian $H=H(\Gamma,\tau,<,w)\in M_k(\TTn)$ is the $k\times k$ matrix whose
entries in $\TTheta^n$ are 

\begin{equation}
H_{ij}=\sum_{\text{directed edges $\vec{e}$ from $v_i$ to $v_j$}} w(\vec{e})
\end{equation}

\subsection{Family in the commutative case}
\label{coversection}
 If $\Theta=0$ we can consider these Hamiltonians
as a family of Hamiltonians over the  n--torus $T^n$ as follows.
First $\T^n_0$ is the $C^*$ algebra of continuous $\C$--valued functions on $T^n$ via
the Gel'fand--Naimark correspondence.
Considered as a space each point of $\T^n_0$ is given  by a character or $C^*$--algebra morphisms
$\chi:\Tn\to \C$. To each such a point, we associate the Hamiltonian $\hat \chi(H)\in M_k(\C)$
where $\hat \chi$ is the natural lift of $\chi$ to the matrix ring. Specifically,
\begin{equation}
(\hat \chi(H))_{ij}=\chi(H_{ij})
\end{equation}
The correspondence between points and characters is in the following way. 
Each point $t\in T^n$
gives rise to the evaluation map  $ev(t):C^*(T^n)\to \C$ which sends a function $f$ to its value $f(t)$ at $t$. Varying $f$ we obtain
a character. The Gel'fand--Naimark theorem asserts that this is 1--1.

Using this correspondence, we get a Hamiltonian $H(t)$ for each $t$. Physically if we think that $T^n$ parameterizes
momenta, we obtain $H(t)$ by just plugging in the given momenta.

Using the formalism we developed, we  get a topological cover $X\to T^n$ where the points over a base point $b$ are
the eigenvalues of $H(t)$ and furthermore a realization of this cover as a subspace
in the manifold $T^n \times \R$ and all of our analysis applies.

\subsubsection{$C^*$--geometry}
One can understand the topological cover $X\to T^n$ in $C^*$--geometry which yields
the basic connection  of our analysis in this section to the previous one. Consider $\B_0\subset M_k(\Tn)$, 
 the algebra generated by $H\in M_k(\Tn)$ and the diagonal embedding of $\Tn$ into $M_k(\Tn)$ as scalars.
This algebra is still commutative and
again by
 applying the Gel'fand--Naimark theorem, we obtain a   compact Hausdorff space $X$, such that $\B_0$ is $C^*(X)$.
The main point is that
the cover of the torus given by the $C^*$ analysis  from the inclusion $\Tn\to \B$  (see \cite{kkwk})
is exactly the cover $\pi:X\to T^n$ considered in the last section.

\begin{rmk}
One can readily generalize this situation to any commutative unital $C^*$ algebra $\A$. We then
get a Hamiltonian over the base space $B$ which satisfies $C^*(B)=\A$.
The role of the algebra
$\B_0$ is then played by the algebra in $M_k(\A)$ generated by $H$ and the diagonal embedding of $\A$.
\end{rmk}

\subsection{Further characterization of $P$}
Again considering $P$ as a polynomial in $z$, 
the coefficient functions $a_k$ in equation (\ref{peq}) can be given a graph theoretical interpretation.
For this it is convenient to introduce the graph $\Gsimp$ and the weight function $w^+$ associated
to $(\bar\Gamma,w)$. The vertices of $\Gsimp$ are just the vertices of $\Gamma$. The edges
of $\Gsimp$ are simply the equivalence classes of edges of $\bar\Gamma$, where two edges are equivalent if they run between the same vertices. This identification induces a weight function $w^{+}$ on $\Gsimp$,
where now $w^+(\vec{[e]})=\sum_{\vec{e'}\in [e]} w(\vec{e'})$. That is the sum over all edges connecting
the same two vertices as $e$. In this notation

\begin{equation}
H_{ij}=\begin{cases} 0 & \text{ if there is no edge between $v_i$ and $v_j$ in $\Gsimp$}\\
w^+(\vec{[e]})&  \text{ if there is a necessarily unique oriented edge $\vec{[e]}$}\\
 &\text{ from $v_i$ to $v_j$ in $\Gsimp$}\\
\end{cases}
\end{equation}

 Plugging this into the usual determinant formula $$det(A)=\sum_{\sigma\in \Sk} sign(\sigma)a_{1\sigma(1)}\cdots a_{k\sigma(k)}$$
 we can  give the summand corresponding to  $\sigma$ graph combinatorially. Decompose $\sigma$ into cycles $c_1,\dots, c_q$ of length $l_1,\dots, l_q$. Then each cycle corresponds to a unique cycle
of oriented edges in $\Gsimp$. Explicitly if $c_j=(j_1j_2\dots j_{l_j})$ then the cycle of $\Gsimp$ is
given by the unique directed edges from $v_{j_r}$ to $v_{j_{r+1}}$ and from $v_{j_{l_j}}$ to $v_{j_1}$ if
all these edges exist. In that case and if $l_j>1$ we set $w^+(c_j)=\prod w^+{\vec{[e]}}$
where the product runs over all the oriented edges in that cycle. Otherwise set $w^+(c_j)=0$.
If $l_j=1$ then $w^+(c_j)=-z+w^+\sum_{e}(\vec{e})+w^+(\cev{e})$ where $e$ are loop edges from
$v_{j_1}$ to itself if these exists and if there are no such edges,  set $w^+(c_j)=-z$ otherwise.
In this notation:
\begin{equation}
\label{patheq}
P(t,z)=(-1)^k\sum_{\sigma\in \SS_k}sign(\sigma) p_{\sigma}(t,z) \text{ with } p_{\sigma}(t,z) =\prod_{j=1}^q w^+(c_j)
\end{equation}

\subsubsection{Graphs with no  small loops }
Assume that $\BG$ has no small loops, that is edges which return to the same vertex.
 Then all the diagonal entries $H_{ii}=0$ and hence $a_{k-1}=0$.
Furthermore, if $i$ be the number of cycles of length one and we assume that these
are the first $i$ cycles, then
\begin{equation}
\label{pathnoloopeq}
p_{\sigma} =(-z)^i\prod_{j=i+1}^qw^+(c_j)(t)
\end{equation}
where the product is now over the cycles of length $>1$.
One can also again read off that $a_{k-1}=0$. Namely, if $k-1$ cycles have length $1$ then all $k$ cycles
have length one.

Using the formula (\ref{pathnoloopeq}) the coefficient $a_{k-2}$ becomes 
\begin{equation}
\label{aktwoeq}
a_{k-2}=-\prod_{e\in \Gsimp} w^+(\vec{e})w^+(\cev{e})
\end{equation}

 \subsubsection{Simply laced graphs with no small loops}

 Thus if furthermore $\bar\Gamma$ is simply laced, that is there is at most one edge between two vertices, then $a_{k-2}=|E(\BG))|$ is simply the number of edges.
Summing up in this case:
\begin{equation}
P(t,z)=z^k-|E_{\BG}|z^{k-2}+a_{k-3}(t)z^{k-3}+\dots +a_0(t)
\end{equation}

Applying the results of \S\ref{mainthmsec} in this situation yields the following.
 \begin{thm}
 If the graph $\BG$ has no small loops the branched cover $X\to T^n$ is the pull back of
 the miniversal unfolding of the $A_{k-1}$ singularity along the 
characteristic map $\chr:(a_0(t),\dots,a_{k-2}(t))$. 
Otherwise it is equivalent to the pull--back.

 The characteristic region is compact connected and lies in the 
closure of a component of
  the real part of $\R^{n-1}\setminus D\cap R^{n-1}$ over which the 
discriminant is positive.

 Moreover if $\BG$ has no small loops and is simply laced, then $R$ 
is contained in the hyperplane 
$a_{k-2}=-|E_{\Gamma}|$ of $\Lambda=\R^{k-1}$.
\end{thm}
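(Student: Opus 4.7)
The plan decomposes the theorem into three assertions and addresses each by combining Theorem \ref{mainthm} with the specific graph-theoretic features already computed in the excerpt.

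For the first assertion, I would observe that ``no small loops'' means every diagonal entry $H_{ii}$ of the Hamiltonian vanishes, so $H(t)$ is traceless for every $t\in T^n$ and consequently $a_{k-1}(t)\equiv 0$. Under this hypothesis the shift $s:z\mapsto z-a_{k-1}(b)/k$ appearing in Theorem \ref{mainthm} is the identity, the diffeomorphism $g$ is trivial, and $\hat P=P$; the cover $X\to T^n$ is then literally the pullback along $\chr$. When small loops are present, $a_{k-1}$ need not vanish, but Theorem \ref{mainthm} still applies verbatim and yields equivalence via $g$.

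For the second assertion, $T^n$ is compact and connected, hence so is its continuous image $R=\chr(T^n)$. Since $H(t)$ is Hermitian its eigenvalues $\lambda_1(t),\dots,\lambda_k(t)$ are real, so the discriminant of its characteristic polynomial equals $\prod_{i<j}(\lambda_i(t)-\lambda_j(t))^2\ge 0$; thus $disc\circ\chr\ge 0$ on $T^n$ and $R$ lies in the closed set $\{disc\ge 0\}\subset\R^{k-1}$. Connectedness of $R$ together with its disjointness from the open set $\{disc<0\}$ forces $R$ to sit inside the closure of a single connected component of the region where the discriminant is strictly positive.

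For the third assertion, being simply laced with no small loops means $\Gsimp=\bar\Gamma$ and every equivalence class $[e]$ of edges contains a single edge, so $w^+(\vec e)=w(\vec e)$. By axiom (2) of the weight function, $w(\vec e)w(\cev e)=1$ for every edge. Feeding this into equation \ref{aktwoeq} (equivalently, computing $a_{k-2}(t)$ directly as the sum of the $2\times 2$ principal minors of $H(t)$, each of which reduces to $-w(\vec e)w(\cev e)=-1$ for every edge of $\bar\Gamma$ and to $0$ for every non-edge) yields $a_{k-2}(t)=-|E_{\bar\Gamma}|$ independently of $t$, so $R$ is contained in the hyperplane $a_{k-2}=-|E_\Gamma|$.

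The arguments are essentially bookkeeping once Theorem \ref{mainthm} and the observation about real-rooted discriminants are in hand; no single step will be difficult. The only delicate point lies in the second part: one should be careful to assert only containment in the closure of a positive-discriminant component, not disjointness from $D$, since the intersection $R\cap D$ is precisely where singular fibers live and is in fact the whole object of interest for the applications later in the paper.
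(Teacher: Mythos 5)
Your proof is correct and follows essentially the same route as the paper, which presents the theorem as a direct corollary of Theorem \ref{mainthm} together with the preceding observations (no small loops $\Rightarrow$ traceless $\Rightarrow$ $a_{k-1}\equiv 0$ and $P=\hat P$; compactness and connectedness of $T^n$ plus realness of the spectrum for the second claim; and the $2\times 2$ principal-minor computation of $a_{k-2}$ for the simply laced case). Your parenthetical remark computing $a_{k-2}$ as the \emph{sum} of the $2\times 2$ principal minors is the right interpretation and in fact tacitly repairs what appears to be a typo in equation (\ref{aktwoeq}), where the displayed $\prod$ should be a $\sum$.
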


\subsection{Wire networks}
We briefly recall the relevant notions from \cite{kkwk} which we will need in the examples.
As in the introduction, we fix a graph $\G$ embedded in $\R^n$, and a maximal translational symmetry group $L$.
such that $\bar\Gamma:=\Gamma/L$. Let $\pi:\Gamma\to\BG$ denote the projection.
We will directly set the magnetic field form $\Theta=0$.
Let $V_{\Gamma}$ be the set of vertices of $\Gamma$ and
$V_{\BG}$ be the vertices of $\BG$ and consider $\H=\ell^2(V_{\Gamma})=\bigoplus_{v\in \BG}\H_v$, where $\H_v=\ell^2(\pi^{-1}(v))$.
The translation group $L$ then naturally acts by translation operators on $\H$ preserving the summands.
This representation is then by commuting linearly independent unitaries and hence gives rise to a copy of $\Tn$.

Fix a spanning tree, with root $v_0$, and an order of the vertices.  Let $T_{\vec{e}}$ be the translation operator along a vector $\vec{e}$.
Notice that the oriented edges $\vec{e}$ lift to unique vectors in $\R^n$ under $\pi^{-1}$. Let $T_{v_iv_0}=T_{v_0v_i}^{-1}$
be the total translation along the unique shortest edge path in the spanning tree from $v_0$ to $v_i$.
Then $T_{v_iv_0}$ given an isometry $\H_0\to \H_{v_i}$. And hence we get an isometry 
$\H\simeq \bigoplus_{v\in \BG}\H_{v_0}:=\H_0$.

Then the weight of an oriented edge $\vec{e}$ from the vertex $v_i$ to the vertex $v_j$ is the translation operator
$ T_{v_0v_i}T_{\vec e}T_{v_iv_0}$. This defines the Harper Hamiltonian as the corresponding
graph Hamiltonian which acts on $\H_0$. It is shown in \cite{kkwk} that indeed these translation operators lie in the translations
generated by $L$ and hence give unitaries in $\Tn$.
Pulling back the translation operators of $L$ to $\H_0$, they together with $H$ generate the commutative
 $C^*$ algebra $\B_0$.

The physical background  for this data as explained in detail in \cite{kkwk,kkwk2} is as follows. Given one of the triply--periodic CMC surfaces, P (primitive), D (diamond) or G (Gyroid), one can consider its thickened or ``fat'' version. Its boundary then consists of two non--intersecting
surfaces, whence the name Double Gyroid, for instance. These surfaces give interfaces which appear
in nature. In particular, the Double Gyroid could recently be synthesized on the nano--scale \cite{Hillhouse}.
The structure contains three components, the ``fat'' surface or wall and two channels.
Urade et al.\ \cite{Hillhouse} 
have also demonstrated a nanofabrication technique in which the channels are 
filled with a metal, while the silica wall can be either left in place or removed. This yields
two wire networks, one in each channel. The graph we consider and call Gyroid graph
 is the skeletal graph of one of these channels. The  P, D, and G examples are the unique such surfaces
where the skeletal graph is symmetric and self--dual. The graph Hamiltonian is then the Harper Hamiltonian for one channel of this wire network.
 The 2d--analogue of this structure is the honeycomb lattice underlying graphene.

\section{Calculations}
Since all calculations are for the base $T^n$  we will consider the function $P$ locally pulled back via  the exponential map 
$\exp:\R^n\to T^n$
$(a_1,\dots, a_k)\to (\exp(i a_1),\dots, \exp( i a_n))$.
In this notation given a point $(a_i)$ and its corresponding character $\chi$,
the translation operators $U_j$ corresponding to the generators of the algebra $\Tn$ 
get mapped to $\chi(U_j)=\exp(ia_j)$. Indeed under the Gel'fand--Naimark correspondence
the operator $U_j$ is the function $\exp(ia_j)$ for the coordinates of $T^n$ above. To simplify
the calculation, we will drop the $\chi$ and just write the function for the operator.

\subsection{The Gyroid}
\label{gyroidsec}

\subsubsection{The matrix and the function $P$}
As shown in  \cite{kkwk}, the relevant graph for the Gyroid wire network
 is the full square which is simply laced. We also fix a spanning tree shown in Figure \ref{spt_gyroid}
ordering the vertices as indicated, the root being $1$.

 \begin{figure}
\includegraphics[width=.3\textwidth]{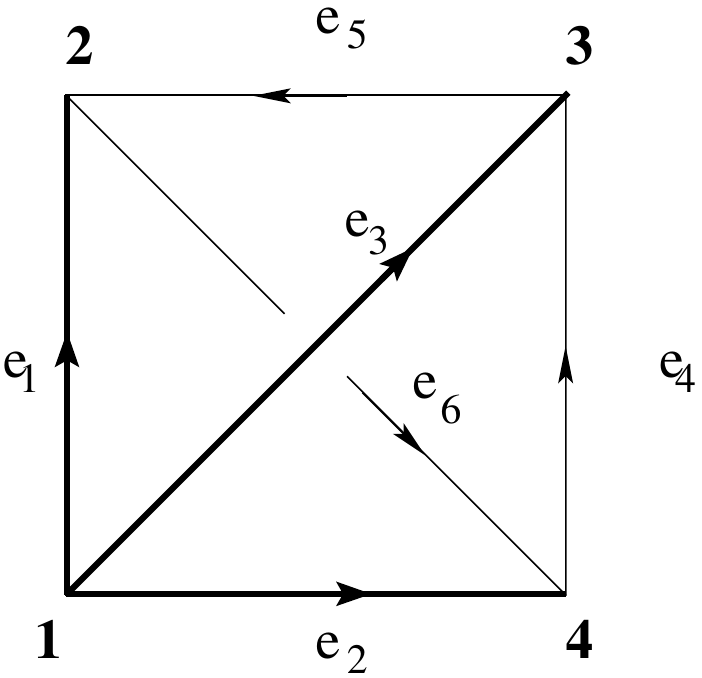}
\hspace{1cm}
\includegraphics[width=.5\textwidth]{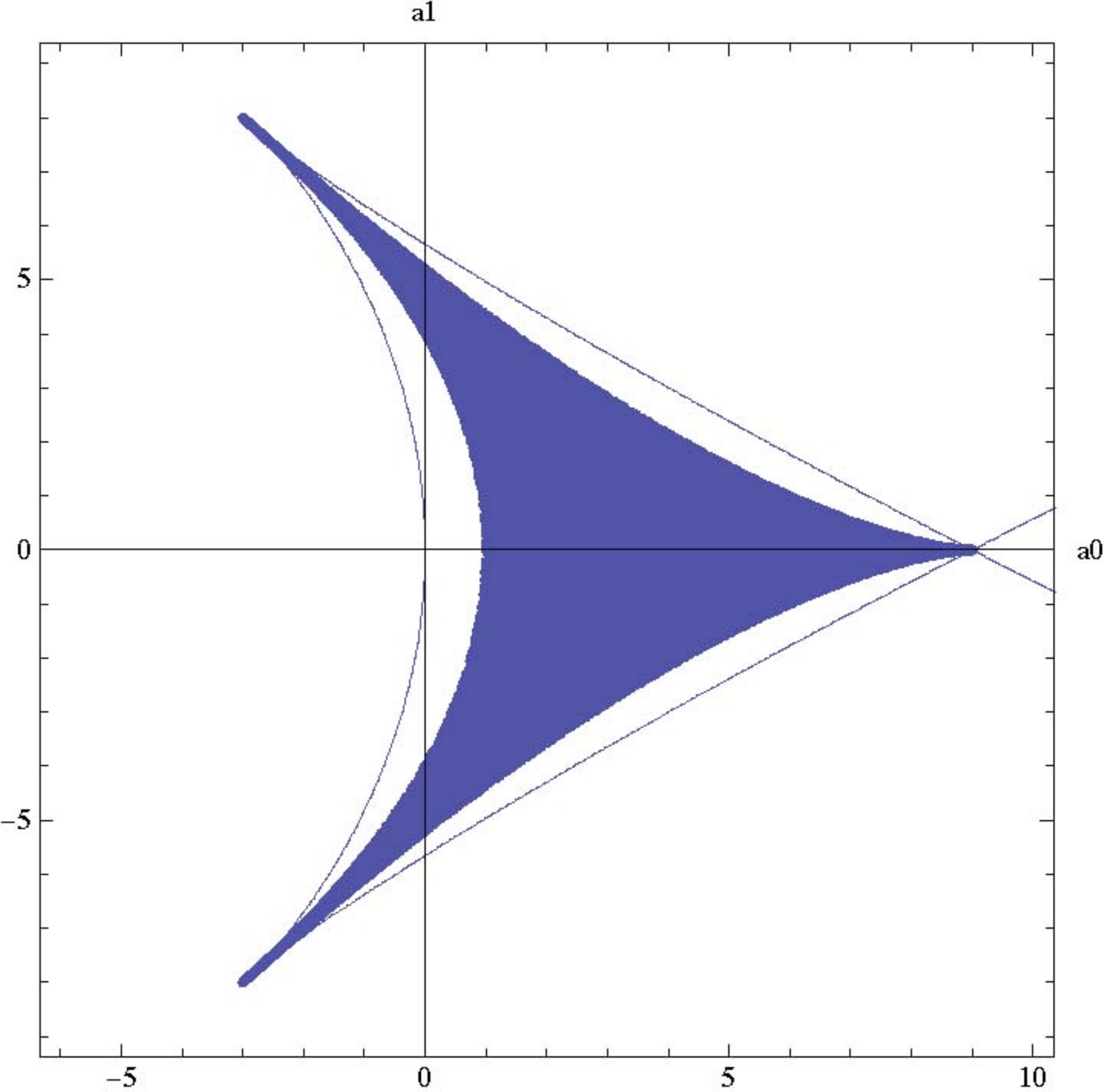}
\caption {Spanning tree and  characteristic region for the Gyroid (solid region). The curve
is the slice of the discriminant of the $A_3$ singularity at $a_2\equiv -6$}
 \label{spt_gyroid}
\end{figure}

The Harper Hamiltonian is a function on $T^3=S^1\times S^1\times S^1$. We can visualize $T^3$ as a cube where opposite sides are identified.

The Harper Hamiltonian reads \cite{kkwk}

\begin{equation}
H=\left(
\begin{array}{cccc}
0&1&1&1\\
1&0&A&B^*\\
1&A^*&0&C\\
1&B&C^*&0
\end{array}
\right)
\end{equation}
where $A$, $B$ and $C$ are operators generating $\T^3_0$ each of which we can think of as a function
on $S^1$ We will rewrite them as $A=\exp(i a)$, $B=\exp(i b), C=\exp(i c)$
with $a,b,c$ real.

The eigenvalues of $H$ are given by the roots of the characteristic polynomial:
\begin{equation}
P(a,b,c,z)=z^4-6z^2+a_1(a,b,c)z +a_0(a,b,c)
\label{charactpol}
\end{equation}
where
\begin{eqnarray}
\label{a0a1}
a_1&=&-2 \cos(a)-2 \cos(b)-2 \cos(c) -2 \cos(a+b+c)\nonumber\\
a_0 &=& 3-2\cos(a+b)-2 \cos(b+c) -2 \cos(a+c)\nonumber
\end{eqnarray}
give the characteristic map $\chr:=(a_0,a_1): T^3\to \R^2$

 \subsubsection{A quick look at the characteristic region.}
The characteristic region of the full square graph is depicted in  Figure \ref{spt_gyroid}.
The curve shown in the figure is the discriminant locus which is explicitly given by
\begin{equation}
 \label{disc}
 20736\; a_0 - 4608 \;a_0^2 + 256 \;a_0^3 + 864\; a_1^2 - 864\; a_0 \;a_1^2 - 27 \;a_1^4=0
 \end{equation} 
 
 The boundaries of the characteristic region are obtained as the collection of points $(a_0,a_1)$ for 
 $a=b=c$ and $a=b=-c$.

We see that the characteristic region is contained in the slice $a_2=-6$ of the $A_3$ singularity and intersects the discriminant
 in exactly three isolated points, the two cusps and the double point of that slice of the swallowtail.
 The  two cusps are in the stratum of type $A_2$ and the double point is in the stratum of type $(A_1,A_1)$.  As is quickly seen and we calculate below the fibers over all these points are indeed discrete. For the $A_2$ singularities, this is just one  point each, giving rise to two triple
 crossings, and the fiber over $(A_1,A_1)$ consists of two points. Over each of these points
 there are two double crossings and it turns out, see below, that these are
 Dirac points.

 This is a very special situation in that the points on the discriminant are
 actually at singular points of the region.
 From the singularities it is easily seen
 that the image of the tangent spaces at the points in the fiber is $0$ and hence $J_a$ vanishes.

\subsubsection{Classification of the critical points}
We first check the condition $\nabla P=0$. This yields the equations:
\begin{eqnarray*}
\frac{\del P}{\del a}&=&z (2 \sin (a+b+c)+2 \sin (a))+2 \sin (a+b)+2 \sin (a+c)=0\\
\frac{\del P}{\del b}&=&z (2 \sin (a+b+c)+2 \sin (b))+2 \sin (a+b)+2 \sin (b+c)=0\nn\\
\frac{\del P}{\del c}&=&z (2 \sin (a+b+c)+2 \sin (c))+2 \sin (a+c)+2 \sin (b+c)=0\nn\\
\frac{\del P}{\del z}&=&-2 \cos (a+b+c)-2 \cos (a)-2 \cos (b)-2 \cos (c)+4 z^3-12 z=0\nn
\end{eqnarray*}

To solve these equations, we rewrite the first three using trigonometric identities as
\begin{eqnarray*}
z \cos (\frac{b+c}{2})=-\cos (\frac{b-c}{2})\nn\\
z \cos (\frac{a+c}{2})=-\cos  (\frac{a-c}{2})\nn\\
z \cos (\frac{a+b}{2})=-\cos (\frac{a-b}{2})\nn
\end{eqnarray*}

In the case that all cosines are different from zero, we can solve each  equation for $z$ and set them equal. This leads to 
\begin{eqnarray}
 \cos (\frac{b-a+2c}{2}) = \cos (\frac{a-b+2c}{2})\nn\\
 \cos (\frac{2a-b+c}{2}) = \cos (\frac{2a+b-c}{2})\nn
 \end{eqnarray}
 
 From the first of these equations we get $a=b\; \mbox{mod 2} \pi$, from the second
 $b=c \;\mbox{mod 2}  \pi$, and $z \cos(a) =-1$ for $\cos(a) \neq 0$.  Plugging this back into the last of the original equations, we find
 $$8 \cos^6(a)+4-12 \cos^2(a)=0$$ Among the solutions we pick those for which the characteristic polynomial $P(a,b,c,z) $ (see Eq. (\ref{charactpol})) is zero,
 namely $\cos(a)=-1/z=\pm1$, i.e. $a=0,\pi$ and $z=\pm1$.
 
 In the case that $\cos(a)=0$ we obtain 
 $$ 4 z^3-12z=0$$ which has solutions $z=0, \pm \sqrt{3}$. $z=0$ has to be discarded since it does not satisfy $P(a,b,c,z) =0$.

Summing up,  the critical points are

\begin{enumerate}
\item $ a=b=c=0  \;\mbox(mod\;2 \pi); z=-1$
\item $a=b=c=\pi \;\mbox(mod\;2 \pi); z=1$
\item $a=b=c=\frac{\pi}{2},\frac{3 \pi}{2}\;\mbox(mod\;2 \pi); z=\pm \sqrt{3}$
\end{enumerate}

Looking at the image of these points under the characteristic map,
we see that $\chr(0,0,0)=(-9,-3)$ is the lower cusp, which is an $A_2$ point with a triple degeneracy,
so is $a(\pi,\pi,\pi)=(9,3)$. For the other two points,  $a(\pi/2,\pi/2,\pi/2)=a(3\pi/2,3\pi/2,3\pi/2)=(0,9)$
and they are in the $(A_1,A_1)$ stratum. This means that these points are candidates for Dirac points,
which they indeed are.

To decide this, we calculate
the Hessian.

Plugging in  $a=b=c=\frac{\pi}{2},\frac{3 \pi}{2}\;\mbox{mod}\;2 \pi;$ $ z=\pm \sqrt{3}$, it becomes
\begin{equation}
\mbox{Hess}=\left(
\begin{array}{cccc}
-4&-2&-2&0\\
-2&-4&-2&0\\
-2&-2&-4&0\\
0&0&0&24
\end{array}
\right)
\end{equation}
which has signature$(---+)$. Notice that the corresponding cone is also not tilted.
For the other two points, the Hessian vanishes, as expected.

\begin{figure}
\includegraphics[width=\textwidth]{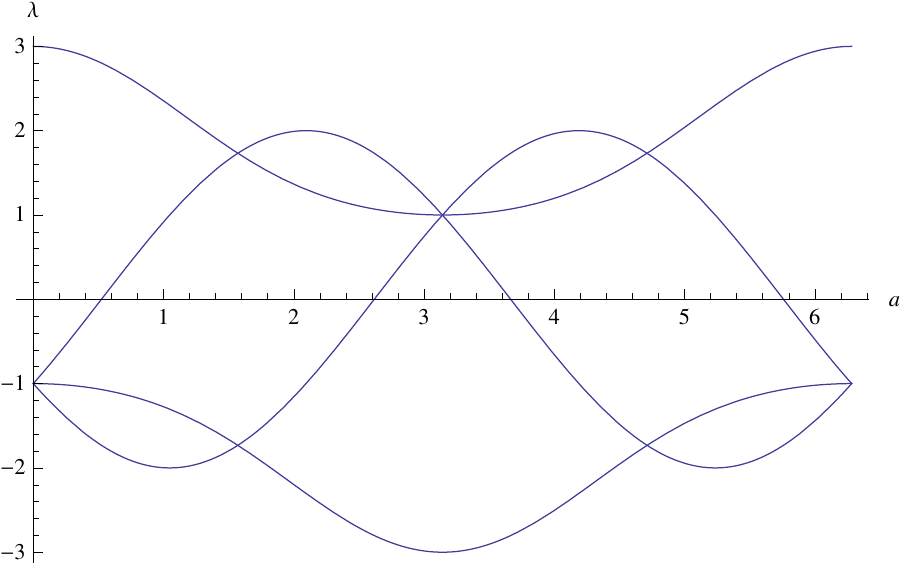}
\caption{Spectrum of the Gyroid Harper Hamiltonian for $a=b=c$}
\label{dispersiondiagonal}
\end{figure}

Since the singular points are all isolated, we can take any direction as transversal.
It turns out that the diagonal curve $C$  given by $a=b=c$ gives a transversal direction 
for all 4 singular points at once.
The spectrum on this line is given in Figure \ref{dispersiondiagonal}
can be obtained by using group theory cf.\ \cite{sym}. 
This is how it was first found in \cite{Avron} where the authors
considered an equivalent family in another context. They also found no other singularities numerically.
Explicitly the spectrum along this diagonal is given by
\begin{eqnarray}
\lambda_1=\omega \, exp(ia)+\bar{\omega} \exp(-ia)&& \lambda_2=\bar{\omega} \exp(ia)+\omega \exp(ia)\nn\\
\lambda_{3,4}=\cos(a)\pm \sqrt{\cos^2(a)+3}
\end{eqnarray}
From this one can see a linear dispersion relation in the direction of the diagonal. Without further
analysis, one cannot deduce the dispersion relation in any other direction from this results.

By our previous analysis we have, however, proven {\em analytically}, that there are indeed no other singularities and furthermore
have determined that there is a linear dispersion relation in {\em all directions} at the points $(\pi/2,\pi/2,\pi/2)$
and $(3\pi/2,3\pi/2,3\pi/2)$ establishing that these are indeed
two Dirac points.

It is interesting to note that the image curve $\chr(C)$ runs first from $(-9,-3)$ to $(0,9)$ along
the boundary of the region,
 continues as the boundary
curve to $(9,-3)$, and then turns back on itself to cover these two boundary pieces twice.

The  nature of the two triple points is that they are isolated and they are
a pullback of the unfolding of the $A_2$ singularity. This is given by the image
of the characteristic map where now we consider the local neighborhood of the cusp point
in the slice as a miniversal unfolding of $A_2$. This is indeed possible and a standard 
way of embedding the unfolding of $A_2$ into that of $A_3$ \cite{arnoldbook}.

\subsection{The honeycomb and diamond cases}
We treat the diamond and the honeycomb case in parallel.
The graphs $\BG$ are given in Figure \ref{graphsDhoney}.

\begin{figure}
\includegraphics[width=.5\textwidth]{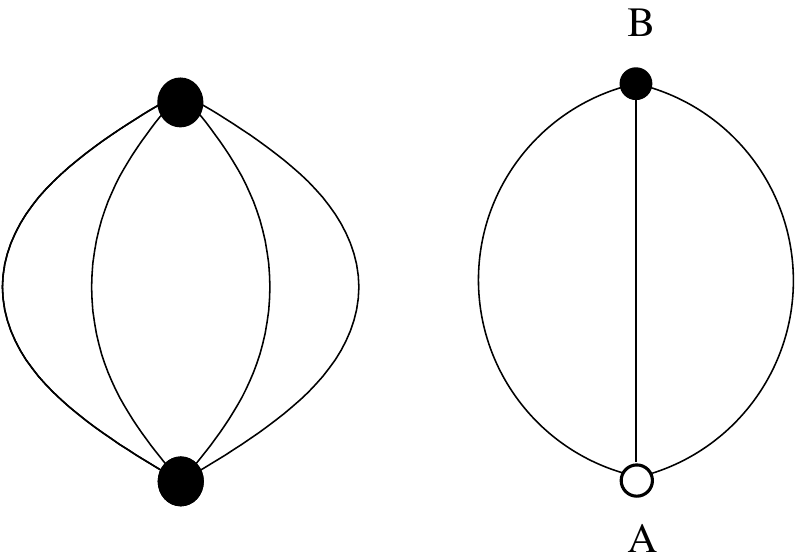}
\caption{The graphs $\BG$ for the diamond (left) and the honeycomb case (right)}
 \label{graphsDhoney}
\end{figure}

The Hamiltonians are 
\begin{equation}
H_{hon}=\left(
\begin{matrix}
0&1+U+V\\
1+U^*+V^*
\end{matrix}
\right)
\end{equation}
and
\begin{equation}
H_D=\left(
\begin{matrix}
0&1+U+V+W\\
1+U^*+V^*+W^*
\end{matrix}
\right)
\end{equation}
We again use $U=exp( i u),V=exp( i v),W=exp(i w)$.

The polynomials are $P(u,v,z)=z^2-3-2cos(u)-2cos(v)-2\cos(u-v)$ and
 $P(u,v,w,z)=z^2-4-2\cos(u)-2\cos(v)-2\cos(w)-2\cos(u-v)-2\cos(u-w)-2\cos(v-w)$.
The characteristic regions in $\R$ are just the intervals $[-9,0]$ and $[-16,0]$. The discriminant
is the point $0$. From this we see that in both cases we have to have $a_0=0$ and 
the singular locus is simply this fiber.

\subsubsection{The honeycomb case}
 For the honeycomb,
the standard calculation shows that in this case $U=V^*$ and $U\in \{\rho_3:\exp(2\pi i/3),\bar \rho_3\}$, which means that the fiber consists of 2 points.
These are the well known Dirac points $(\rho_3,\bar\rho_3),(\bar\rho_3,\rho_3)$.
We can check this explicitly:
$\nabla(P)=(2\sin(u)+2\sin(u-v),2\sin(v)-2\sin(u-v),z)$ from which we see that $z=0$ and $u\equiv - v 
\equiv 2u (2\pi)$. Furthermore 
\begin{equation}
Hess_{hon}=\left(
\begin{matrix}
2\cos(u)+2cos(u-v)&-2\cos(u-v)&0\\
-2(cos(u-v))&2\cos(v)-2\cos(u-v)&0\\
0&0&1\\
\end{matrix}
\right)
\end{equation}
which has the correct signature $(--+)$ at the given points, from which we recover
the known result that these points are Dirac points.

\subsubsection{The diamond case}
The equation for the fiber over $0$, $$-4-2cos(u)-2\cos(v)-2\cos(w)-2\cos(u-v)-2\cos(u-w)-2\cos(v-w)=0$$
has been solved in \cite{kkwk2} and the solutions are given by $(u,v,w) =(\phi_i,\phi_j,\phi_k)$
with $\phi_i=\pi, \phi_j\equiv \phi_k+\pi\; \mbox{mod}\; 2\pi$ with $\{i,j,k\}=\{1,2,3\}$.
So in this case the fiber of the characteristic map is 1--dimensional and the pull--back
has singularities along a locus of dimension $1$, which also implies that there are no Dirac points.
Geometrically the singular locus are three circles pairwise intersecting in a point.

\subsubsection{The characteristic  map and region}
For both the honeycomb and the diamond graph, the relevant singularity is $A_1$. Both these graphs
are not simply laced, so their image is not contained in a slice. The swallowtail is only one point $0$
and this is the stratum of type $A_1$.
In the honeycomb case the fiber over this point is discrete and consists of two points, while in the case of the diamond lattice the fiber is not discrete and it is given by three circles pairwise intersecting at a point. It turns out that in the honeycomb case the two candidates for Dirac points are indeed Dirac points. While for the $D$ case there is a non--trivial fiber which is essential 1--dimensional.
Hence we do not get Dirac points, but rather spread out singularities.

\subsection{Three--vertex graphs}

In order to have some examples for the $A_2$ singularity and to show
the kind of behavior that is possible, we considered
a three vertex graph with either only simple edges, or one, two or all of the edges doubled.

The characteristic regions are seen in Figures \ref{trigA}-\ref{scatterABCD}.
Here we see that for the simply laced case, we get a slice, for one or two doubled edges,
we get parabolic regions, which intersect the boundary in two points, which are of type $A_1$
and finally, in the case of all edges being double, the characteristic region is a surface,
which is bounded by the discriminant and the line on which $a_1$ takes its maximal value, in this case $a_1=12$.

\subsubsection{Triangle with single bonds}
We consider the graph and the spanning tree given in Figure \ref{trigA}.
 \begin{figure}
\includegraphics[width=.3\textwidth]{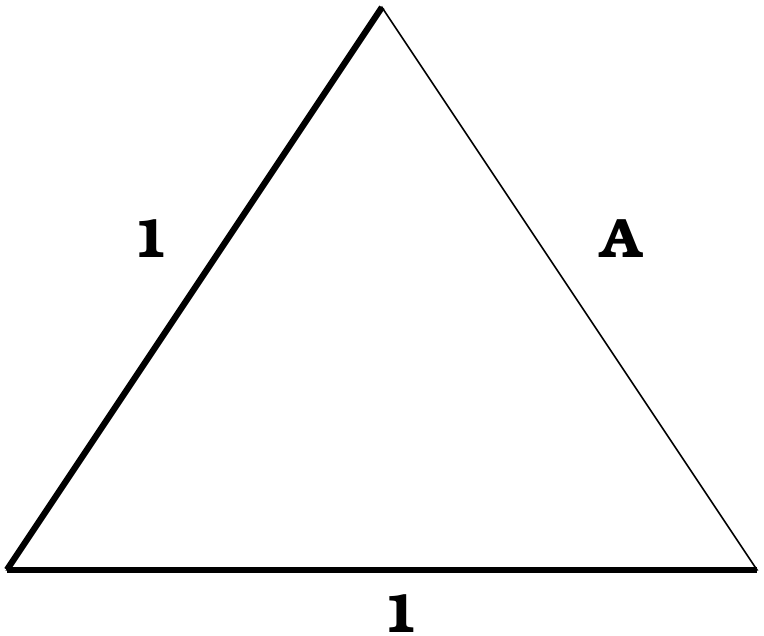}
\hspace{1cm}
\includegraphics[width=.3\textwidth]{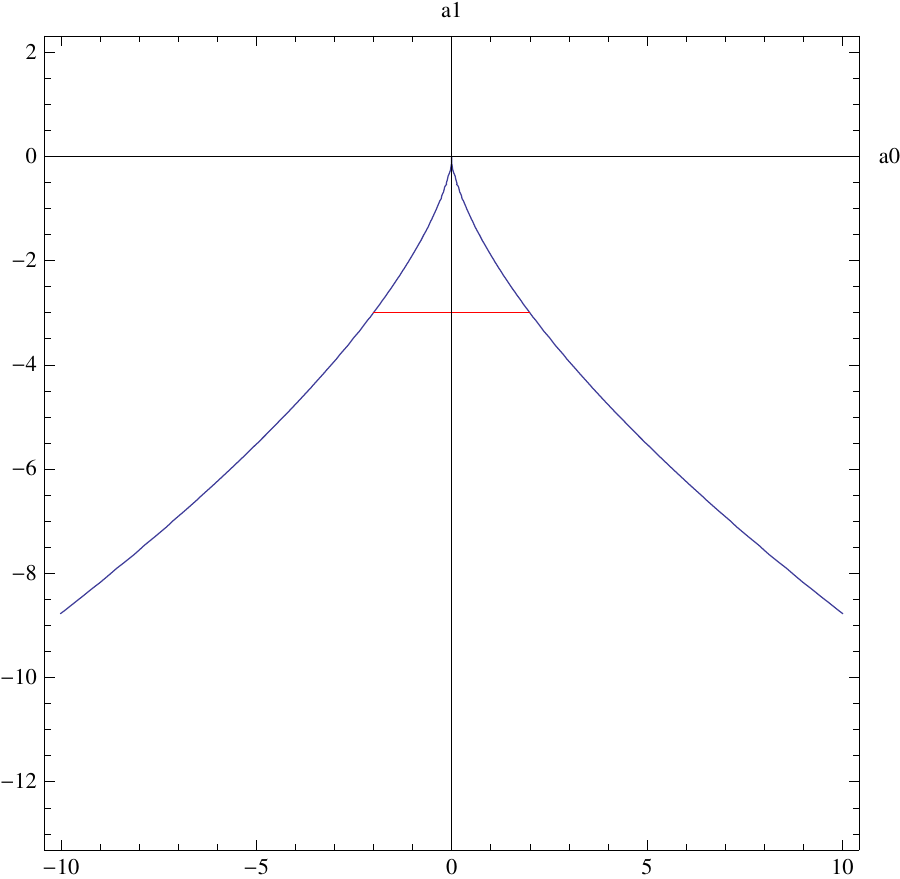}
\caption {Spanning tree and characteristic line for a triangle with single bonds}
 \label{trigA}
\end{figure}

The associated Harper Hamiltonian reads:
\begin{equation}
H=\left(
\begin{array}{cccc}
0&1&1\\
1&0&A\\
1&A^*&0
\end{array}
\right)
\end{equation}
where $A$ is an operators on $S^1$. We will rewrite it as $A=\exp(i a)$ with $a$ real.
The characteristic polynomial is:
\begin{equation}
P(a,z)=z^3-3z-2 \cos(a)
\end{equation}

The characteristic region is easy to calculate, since the graph is simply laced
it is contained in the slice $a_1=-3$. The image under $-2\cos(a)=[-2,2]$ thus  $R=[-2,2]\times -3$ in $\R^2$.
Figure \ref{trigA} shows this 
together with the zero locus of the discriminant.

From this we see that all possible singularities occur at $a_0=-2\cos(a)=\pm2$ that is $a\equiv 0,\pi (2\pi)$. Indeed  calculating $\nabla  P(a,z)$, we use
\begin{eqnarray}
\frac{\del P}{\del a}&=&2 \sin(a)\\
\frac{\del P}{\del z}&=&3 z^2-3
\end{eqnarray}

These equations vanish simultaneously for the following choice of variables:
\begin{enumerate}
\item $z=\pm1$
\item $a=0, \pi \; (\mbox{mod} \;2 \pi)$
\end{enumerate}

Among all possible combinations, the choices $(z=-1,a=0)$ and $(z=1,a=\pi)$ are zeros
of $p(z,a)$.  For these two points, the Hessian
\begin{equation}
\mbox{Hess}=\left(
\begin{array}{cccc}
2 \cos(a)&0\\
0&6z
\end{array}
\right)
\end{equation}
has a non-vanishing determinant of $\mbox{det(Hess)}=-12$ and signature $(-+)$.
So we find two Dirac points. Again the cone is not tilted.

\subsubsection{Triangle with one double bond}
We consider the graph and the spanning tree given in Figure \ref{scatterAB} where one of the bonds is a double bond.
The Harper Hamiltonian reads in this case
\begin{equation}
H=\left(
\begin{array}{cccc}
0&1&1\\
1&0&A+B\\
1&A^*+B^*&0
\end{array}
\right)
\end{equation}
where $A$, $B$ are operators on $S^1$. We will rewrite them as $A=\exp(i a)$, $B=\exp(i b)$ with $a,b$ real.
The characteristic polynomial is:
\begin{equation}
P(a,b,z)=z^3-(4+2\cos(a-b))z-2 \cos(a)-2 \cos(b)
\end{equation}
Again, we set $a_1=-(4+2\cos(a-b))$ and $a_0=-2 \cos(a)-2 \cos(b)$.
We see that this time the characteristic region is not contained in a slice, which 
was not to be expected since the graph is not simply laced.
The region is depicted via a scatter plot in Figure \ref{scatterAB}. One reads off
that $R$ intersects with the discriminant locus in two points.
 \begin{figure}
\includegraphics[width=.3\textwidth]{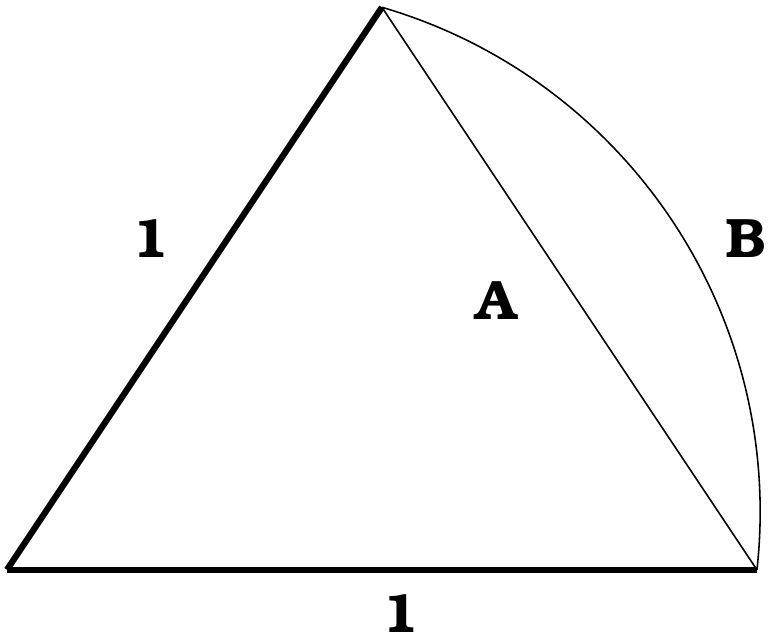}
\hspace{1cm}
 \includegraphics[width=.5\textwidth]{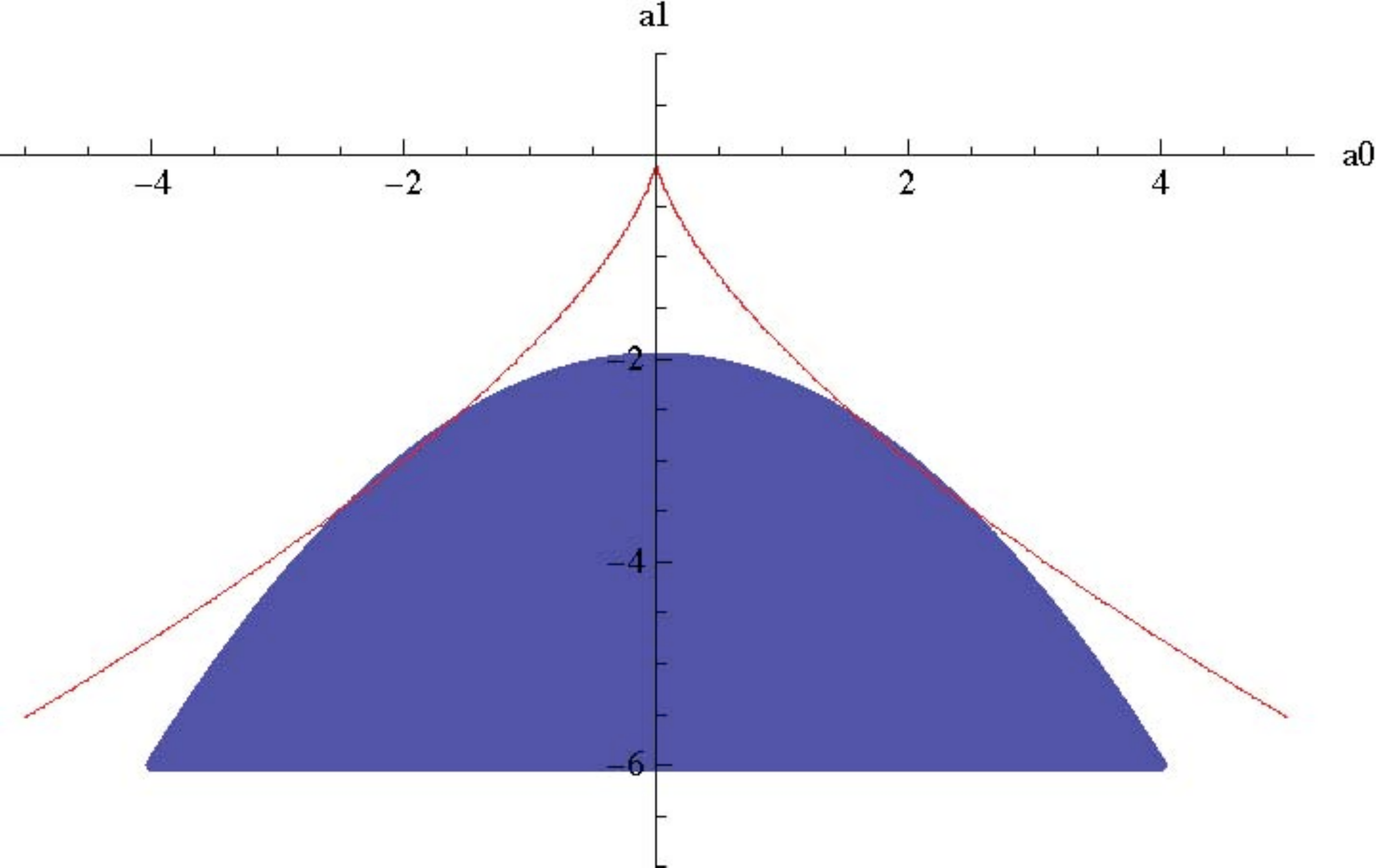}
\caption{Spanning tree and characteristic region with double bond}
 \label{scatterAB}
\end{figure}

To calculate   $\nabla  P(a,b,z)=0$ we use,
\begin{eqnarray}
\frac{\del P}{\del a}&=&2 \sin(a-b)z+2 \sin(a)\nn\\
\frac{\del P}{\del b}&=&-2 \sin(a-b)z+2 \sin(b)\nn\\
\frac{\del P}{\del z}&=&3 z^2-(4+2\cos(a-b))
\end{eqnarray}
From the first two equations, we get either $a=0,\pi$ and $b=0,\pi$, but for all combinations of those, the remaining two equations ($\frac{\del P}{\del z}=0$ and $P(a,b,z)=0$) cannot be simultaneously solved. Therefore the only possible solution for the first two equations is to take $a=-b$ and $z=-\frac{\sin(a)}{\sin(2a)}$ for $\sin(2a)\neq 0$. Putting this  into the last equation yields the trigonometric equation 
$$
3 \sin^2(a) -4 \sin^2(4a)-2\cos(2a)\sin^2(2a)=0
$$ 
which has the solutions $a= \pm \frac{ \pi}{3},\pm \frac{2 \pi}{3}$. These also  lead to a vanishing of the characteristic polynomial.

So we get the following two solutions (the negative values lead to the same values for $a_0$ and $a_1$):
\begin{enumerate}
\item $a=\frac{ \pi}{3}, b=-\frac{ \pi}{3} \;\mbox{mod}\; 2 \pi, z=-1$
\item $a=\frac{ 2 \pi}{3}, b=-\frac{ 2 \pi}{3}\;\mbox{mod} \;2 \pi, z= 1$
\end{enumerate}

The Hessian is
\begin{equation}
\mbox{Hess}=\left(
\begin{array}{cccc}

2 \cos(a-b)z+2 \cos(a)&-2z\cos(a-b)&2 \sin(a-b)\\
-2z\cos(a-b)& 2 \cos(a-b)z+2 \cos(b)&-2 \sin(a-b)\\
2\sin(a-b)&-2\sin(a-b)&6z\\
\end{array}
\right)
\end{equation}
and has signatures $(++-)$ and $(--+)$, respectively. Here the cone  is actually tilted.
Again there are two Dirac points.

\subsubsection{More variations}

In the same way, we can obtain information about possible Dirac points for the following graphs:

\begin{figure}[ht]
\centering
\includegraphics[width=.3\textwidth]{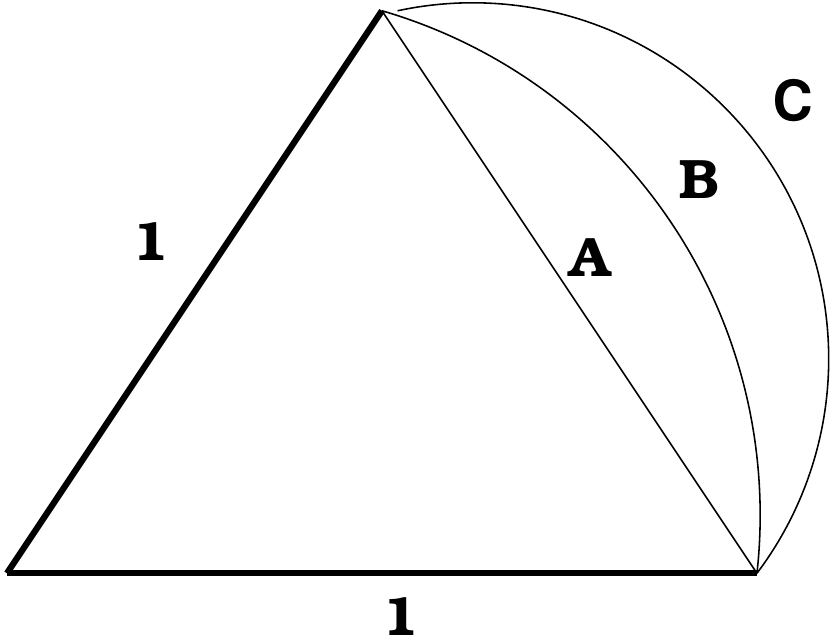}
\hspace{1cm}
\includegraphics[width=.5\textwidth]{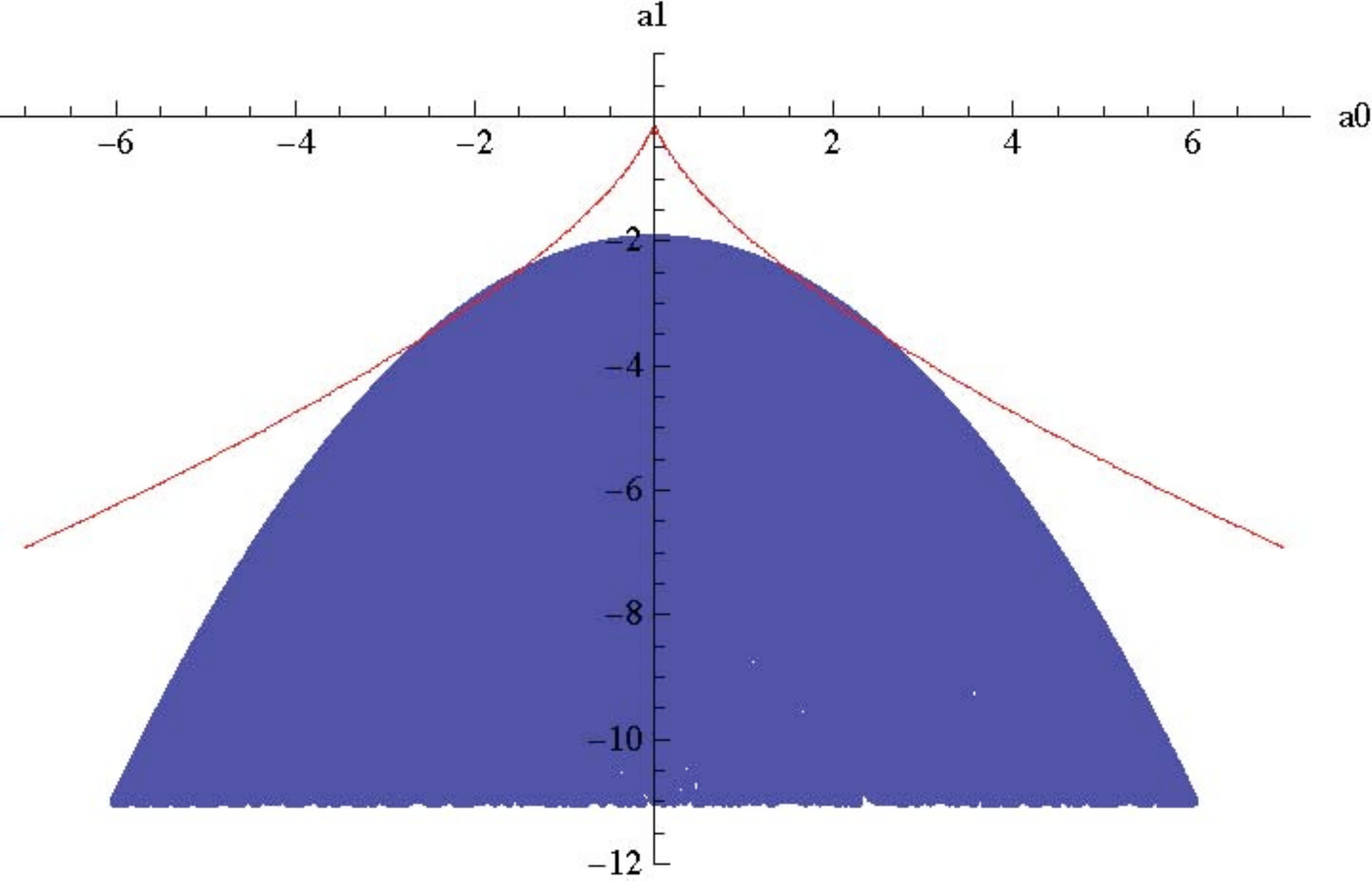}
\caption{Spanning tree and scatter plot of characteristic region with triple bond}
 \label{scatterABC}
\end{figure}

\begin{figure}[ht]
\includegraphics[width=.3\textwidth]{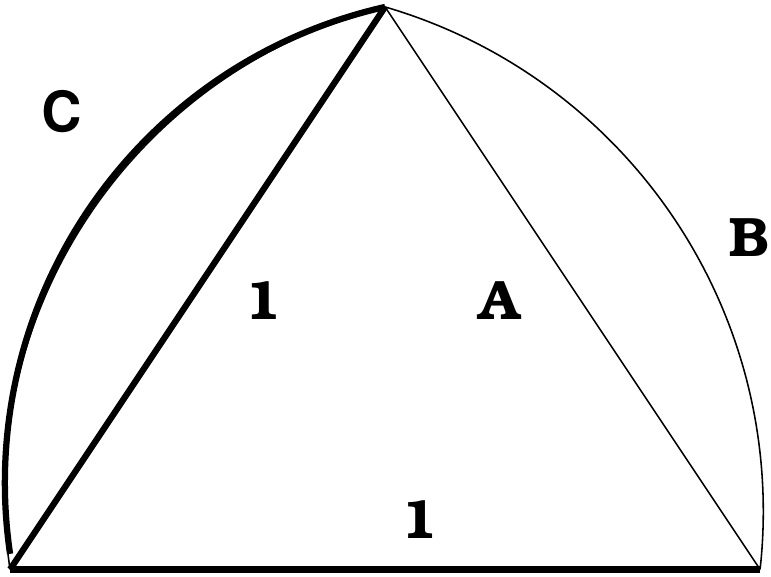}
\hspace{1cm}
\includegraphics[width=.5\textwidth]{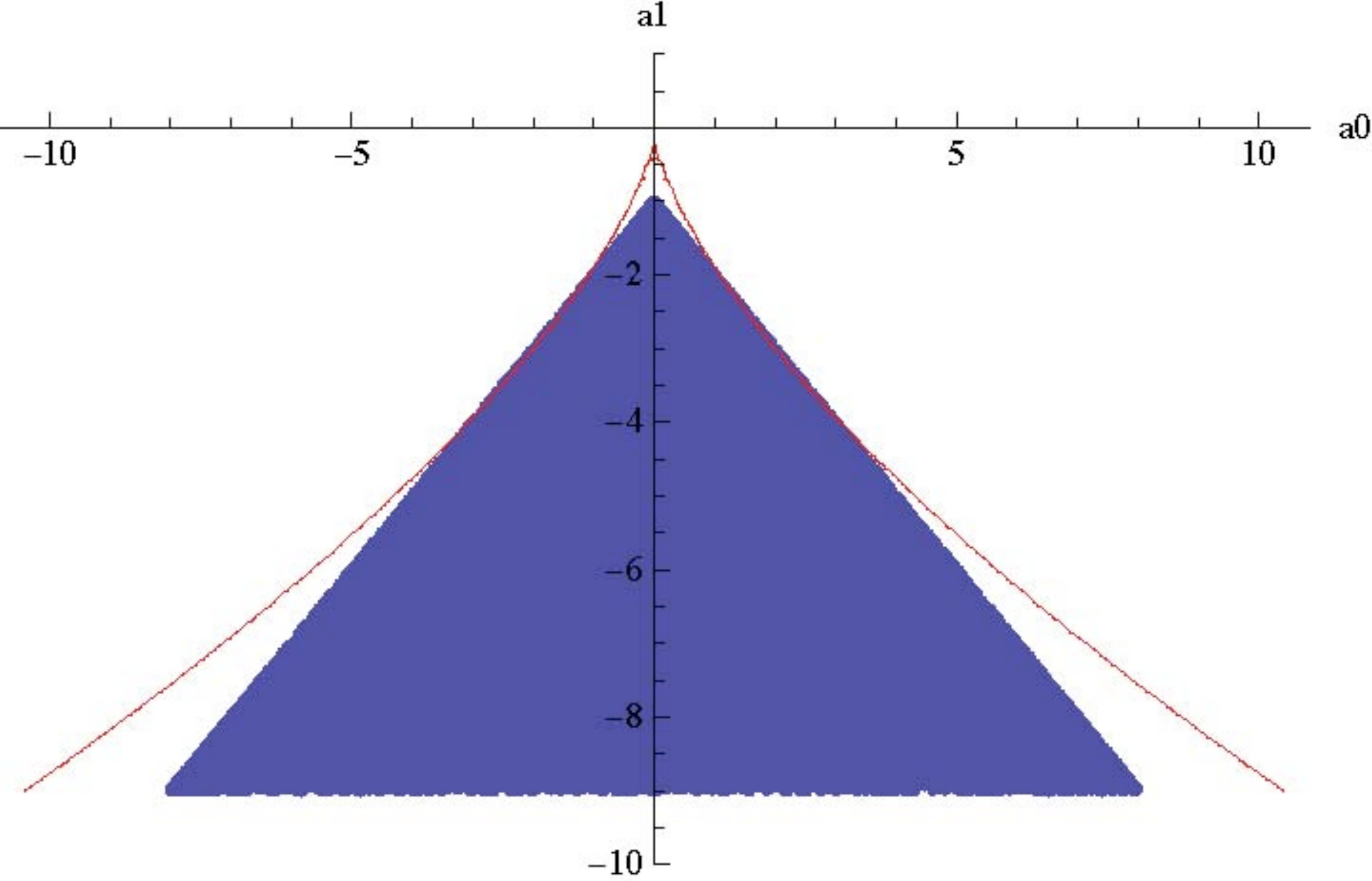}
\caption{Spanning tree and characteristic region with two double  bonds}
 \label{scatterABD}
\end{figure}

\begin{figure}[ht]
\includegraphics[width=.3\textwidth]{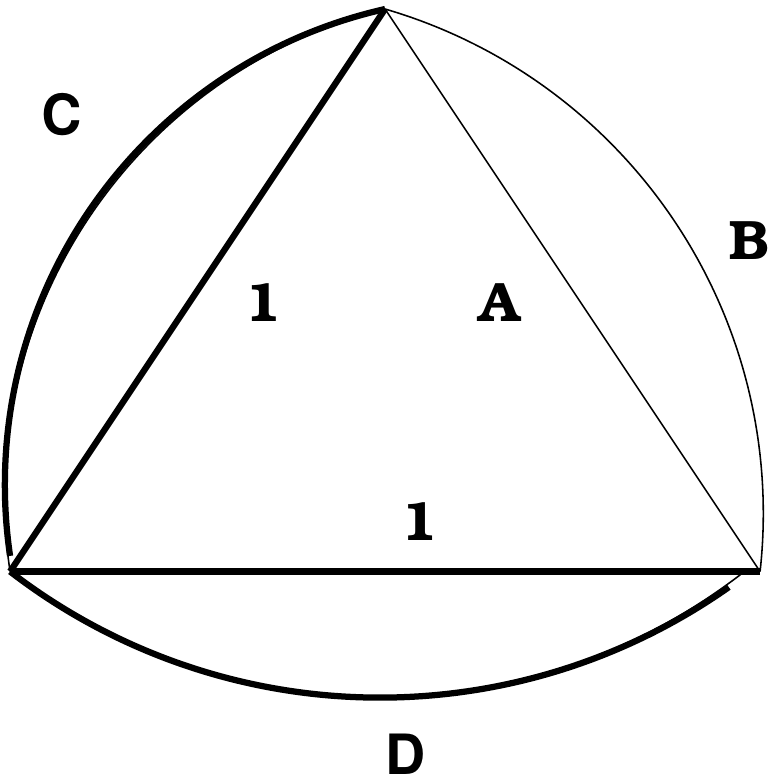}
\hspace{1cm}
\includegraphics[width=.5\textwidth]{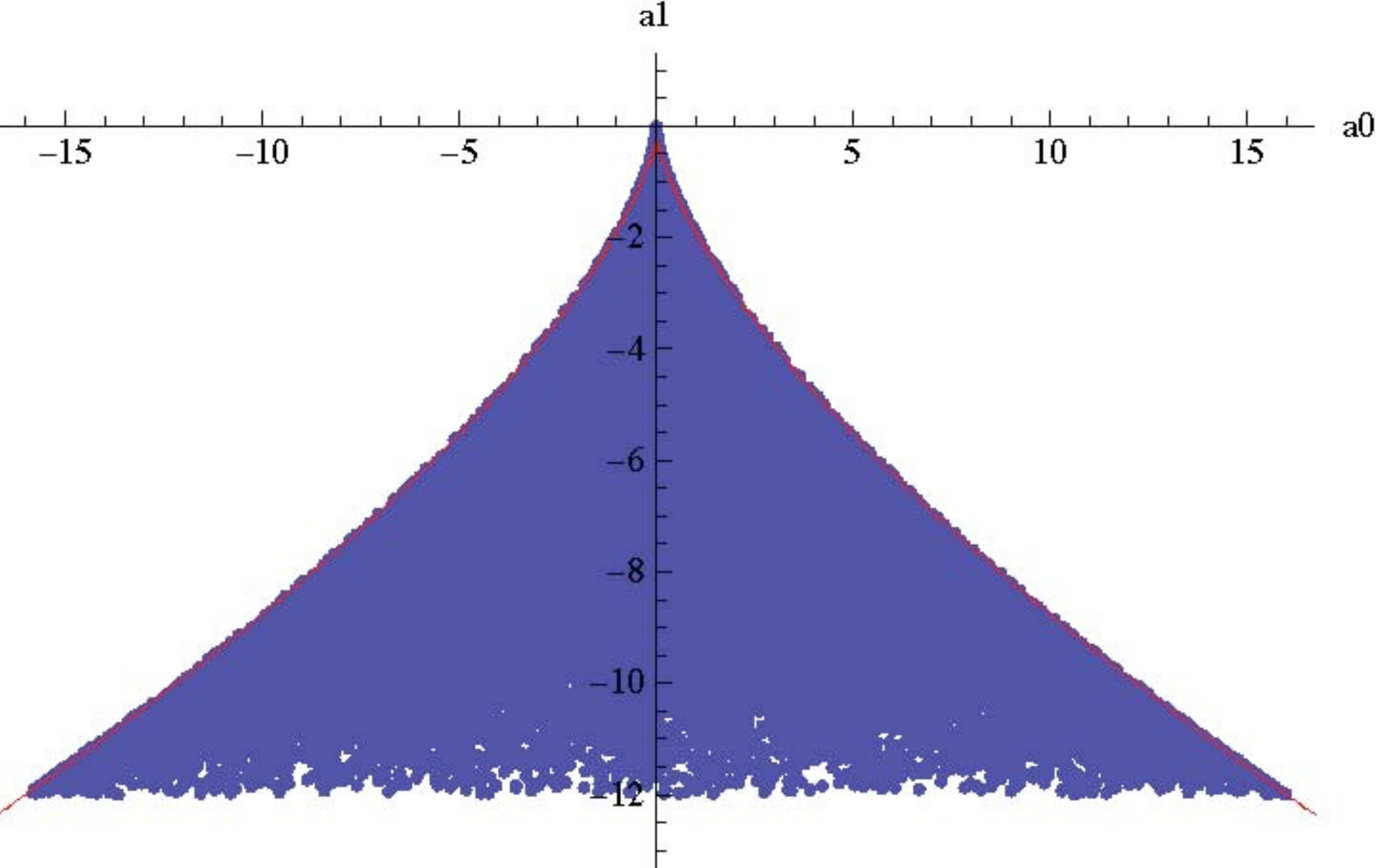}
\caption{Spanning tree and  scatter plot of characteristic region with three double  bonds}
 \label{scatterABCD}
\end{figure}

For the triple bond case shown in Figure \ref{scatterABC} and the two double bonds in Figure \ref{scatterABD}, 
we see two isolated intersection points, but the fiber will be dimension
$1$, so like in the D--case there will be no Dirac points.

Considering three double bonds (see Figure \ref{scatterABCD})
we see that now the intersection of $R$ with $D$ is along two of the boundaries
of $R$ and hence not only will the  fiber have dimension $2$, but we also expect
to have horizontal directions which to not resolve the singularity. We leave this for further
investigation.

\subsection{The P and other Bravais cases}
Here the graph has small loops, and the characteristic polynomial has to be transformed.
It is simply a polynomial of degree $1$. $P(t,z)=z-\sum_i t_i$, so that after shifting $z$ we
are left with just $z=0$, which is not critical. Not surprisingly, there are no singularities.

\section{Conclusion and outlook}
We have developed a general method to analyze singularities in the spectra of smooth families of $k\times k$
Hamiltonians parameterized
by a base $B$
using singularity theory. In particular, we realized the spectrum as a singular submanifold $X$ of the smooth product $B\times \R$
as  the zero set of a function $P$. 
This led us to a simple characterization of Dirac points as $A_1$ or Morse singularities of $P$ with critical value $0$ whose signature is
$(-\dots-+)$ or $(+\dots +-)$ . Furthermore we could represent $\pi:X\to B$ up to a given diffeomorphism on the ambient space
as the pull--back of the miniversal unfolding of the $A_{k-1}$ singularity via a characteristic map $\chr$. This classifies all
the possible singularities of the fibers of $\pi$ as $(A_{n_1},\dots, A_{n_l})$ 
The image of the characteristic map, called the characteristic region, allows one to read off which ones occur.

We then applied these techniques to the graph Hamiltonians and wire networks. Here we reproduce the known Dirac points
for graphene and make the surprising find that the Gyroid wire network also has  Dirac points. We expect that
this should have practical applications.

The situation for the Gyroid is very special, as the characteristic region goes into the cusps and the self--intersection locus of the swallowtail 
without prior contact to the ``walls''.  Were this not be the case, one would not expect isolated points.
We gave more graph examples to illustrate how special this behavior is.
Adding multiple edges, we expect to get ``full'' region as in case of the three double bonds in a triangle.

One exciting find is that there seems to be a 
commutative/non--commutative
duality in the wire network families first stated in \cite{kkwk2}. 
By this we mean the observation that there is a correspondence between the
locus of degenerate points in the commutative setting and
the locus of parameters where the corresponding non--commutative algebra 
$\B_{\Theta}$ is not the full matrix algebra $M_k(\TTheta)$.
The correspondence is not 1--1, but the top dimensions agree and
there are further features that look dual. We wish to emphasize that 
although the space $T^n$ appears in both settings as the parameter
space, it {\it a priori} plays two different roles. 
In the case without
magnetic field, the parameters are quasimomenta, while in the case with magnetic field 
they are the field's components.
It is intriguing to speculate that the non--commutative setting is
a model for a non--commutative unfolding of singularities and
that this furnishes the framework to make the duality explicit. 
This will be a topic of further research.

There are several other directions of research that are immediate. First one can ask if in the graph case there are symmetries forcing
the degeneracies. This is pursued in \cite{sym} using a regauging groupoid action.
Second, a physically relevant question is how stable the singular fibers are with respect
to deformations of the Hamiltonian. For a 3-dimensional parameter space, simple singularities, i.e., 
Dirac points, are ``magnetic monopoles'' \cite{Berry} and are expected to be topologically stable. This,
and the evolution of other types of singularities, is further discussed in \cite{impure}.  
We will also focus on making the
type of analysis explicit in the non--commutative geometry language. 
There should be some kind of characteristic classes and parings much like in the
setting of the quantum Hall effect as presented in \cite{BE,Marcolli}.

\section*{Acknowledgments}
RK thankfully acknowledges
support from NSF DMS-0805881.
BK  thankfully acknowledges support from the  NSF under the grant PHY-0969689.

  Any opinions, findings and conclusions or
recommendations expressed in this
 material are those of the authors and do not necessarily
reflect the views of the National Science Foundation.

Part of this work was completed when RK was visiting the IAS in Princeton, 
the IHES in Bures--sur--Yvette, the Max--Planck--Institute in Bonn and the University of Hamburg with a Humboldt fellowship. He gratefully acknowledges
their contribution. Likewise BK extends her gratitude to the Physics Department 
of Princeton, where part of this work was completed and to the DESY theory group where the finishing touches for this article were made.

The authors furthermore thank D. Berenstein, A. Libgober, M.~Marcolli and T. Spencer 
for discussions which were key to formalizing and finalizing our concepts.

\end{document}